\documentclass[11pt,a4paper]{article}
\emergencystretch=2em

\usepackage{amssymb,amsmath,amsthm,amstext}
\usepackage{flushend}
\usepackage{authblk}
\usepackage{palatino,a4}

\usepackage{tikz}
\usetikzlibrary{intersections,positioning,shapes,calc,decorations.pathmorphing,decorations.pathreplacing}

\definecolor{myblue}{RGB}{80,80,160}
\definecolor{mygreen}{RGB}{80,160,80}


\usepackage{color}
\renewcommand{\marginpar}[1]{}

\if01
\setlength{\topmargin}{-.75in}
\setlength{\textheight}{8.70in}
\setlength{\oddsidemargin}{0in}
\setlength{\evensidemargin}{0in}
\setlength{\textwidth}{6.20in}
\fi

\newtheorem{theorem}{Theorem}[section]
\newtheorem{lemma}[theorem]{Lemma}

\newtheorem{claim}[theorem]{Claim}
\newtheorem{proposition}[theorem]{Proposition}
\newtheorem{definition}[theorem]{Definition}
\newcommand{\zo}{\{0,1\}}
\newcommand{\ie}{i.e.}
\newcommand{\mapping}{\rightarrow}
\newcommand{\prob}{\rm Prob}
\newcommand{\poly}{\text{poly}}

\usepackage{epsfig}
\usepackage{amsmath}

\usepackage{amsfonts}

\begin{document}

\title{Linear list-approximation  for  short programs (or the power of a few random bits)}
\author[1]{Bruno Bauwens}
\affil[1]{Universit\'e de Lorraine}
\affil[2]{Towson University}
\author[2]{ Marius Zimand\footnote{
  A preliminary version has been presented at the 29-th IEEE Conference on Computational Complexity, June 11-13, 2014, Vancouver, British Columbia, Canada.
  \vspace{5pt} 
  \\Bruno Bauwens is grateful to Mathieu Hoyrup and Universit\'e de Lorraine for financial
  support. Marius Zimand was partially supported by NSF grant CCF 1016158.
  \vspace{5pt} 
  \\Author's addresses: Bruno Bauwens; 
  Department of Applied Mathematics, Computer Science and Statistics, Ghent University, 
  Krijgslaan 281, S9,  9000 Gent, Belgium; http://www.bcomp.be;  and Marius Zimand, 
  Department of Computer and Information Sciences, Towson University,
  Baltimore, MD.; 
  http://triton.towson.edu/\~{ }mzimand.
  }
  }
\maketitle

\begin{abstract}
A $c$-short program for a string $x$ is a description of $x$ of length at most $C(x) + c$, where $C(x)$ is the Kolmogorov complexity of $x$. We show that there exists a randomized algorithm that constructs a list of $n$ elements that contains a $O(\log n)$-short program for $x$. We also show a polynomial-time randomized construction that achieves the same list size for $O(\log^2 n)$-short programs. These results beat the lower bounds shown by Bauwens et al.~\cite{bmvz:c:shortlist} for deterministic constructions of such lists.  We also prove tight lower bounds for the main parameters of our result. The  constructions use only $O(\log n)$  ($O(\log^2 n)$ for the polynomial-time result) random bits. Thus using only few random bits it is possible to do tasks that cannot be done by any deterministic algorithm regardless of its running time.
\end{abstract}





%
%

\section{Introduction}
The Kolmogorov complexity of a string $x$, denoted $C(x)$, is the length of a shortest description of $x$ relative to a fixed universal Turing machine $U$. In many applications, it is desirable to represent information $x$ in a succinct form, \ie, to find a string $p$ such that $U(p) = x$ (such a $p$ is called a \emph{program} for $x$) with length $|p| \approx C(x)$.  Unfortunately this is not possible: Not only is $C(x)$ uncomputable, but it cannot be even approximated in a useful way. Indeed, while the upper bound $C(x) < |x| + O(1)$ is immediate, Zvonkin and Levin~\cite{zvo-lev:j:kol} have shown that no unbounded computable function can lower bound $C(x)$. Beigel et al. ~\cite{bei-bur:j:enumerations} have investigated the \emph{list-approximability} of $C(x)$, \ie, the possibility of constructing a short list of numbers guaranteed to contain $C(x)$. They show that there exists a constant $a$ (which depends on the universal machine $U$) such that any computable list containing $C(x)$ has size $n/a$ (where $n$ is the length of $x$). Since it is trivial to obtain a list of size $n+ O(1)$ that contains $C(x)$, the result of~\cite{bei-bur:j:enumerations} implies that no list-approximation is possible with
  lists  significantly  shorter than the trivial one.

 In view of these strongly negative facts, the recent results of Bauwens, Mahklin, Vereshchagin and
 Zimand~\cite{bmvz:c:shortlist} and Teutsch~\cite{teu:j:shortlist} are surprising. They show that it
 is possible to effectively construct a short list guaranteed to contain a close-to-optimal program
 for $x$.  Even more, in fact the short  list can be computed in polynomial time. More
 precisely,~\cite{bmvz:c:shortlist}  showed that one can compute lists of quadratic 
 size guaranteed to contain a program of $x$ whose length  is $C(x) + O(1)$ 
 and  that one can compute in polynomial-time a list guaranteed to contain a program whose length is
 additively within  $C(x) + O(\log n)$. 
 \cite{teu:j:shortlist}  improved the latter result  by
 reducing the $O(\log n)$ term to  $O(1)$ (see also~\cite{zim:c:shortlistshortproof} for a simpler proof).

In this paper, we investigate how short a computable list that contains a succinct program for $x$ can be.  The size of the  list in~\cite{bmvz:c:shortlist} is \emph{quadratic} in $n$ and in fact in the same paper it is shown that this is optimal because any effectively computed list that contains a program that is additively $c$ close to optimal length must have size $\Omega(n^2/(c+1)^2)$ (for any $c$). The size of the list in the polynomial-time construction from~\cite{teu:j:shortlist} is $n^{7 + \epsilon}$ and~\cite{zim:c:shortlistshortproof} improves it to $O(n^{6 + \epsilon})$. 
We show here that the size of the list can be \emph{linear},  thus beating the above quadratic lower bound,   if we allow \emph{probabilistic computation}, in fact even \emph{polynomial-time probabilistic computation}. Namely, we show  that there exists a  probabilistic algorithm that on input $x$ of length $n$ produces a list of $n$ elements, that, with high probability, contains a  program of $x$ which is additively within $O(\log  n)$ from optimal. We also show the existence of a polynomial-time algorithm with the same property but for $O(\log^2 n)$ closeness to optimality.  The lower bound mentioned above shows that such a list cannot be deterministically computed regardless of the running time. Furthermore, the first algorithm uses only  $O(\log n)$ random bits, and the polynomial-time algorithm uses only $O(\log^2 n)$ random bits. The relevance of these facts will be discussed shortly. These results are shown in Section~\ref{s:ub}. In Section~\ref{s:lb}, we prove tight lower bounds which show that our results are essentially optimal. More precisely, we consider the parameters $c, T, r$ in our main result which  are defined as follows: (1) $c$ is the closeness to $C(x)$ of the length of the desired succinct program, (2) $T$ is the size of the list guaranteed to contain such a succinct program, (3) $r$ is the number of random bits used in the probabilistic construction of the list. In the main result we obtain $c = O(\log n)$, $T = n$, and $r = O(\log n)$. We show that essentially none of these parameters can be improved while keeping the other two the same.  
\smallskip

\textbf{Discussion: The power of randomized computation.}
Can we solve using randomness tasks that cannot be solved without? This is a foundational question,
and its exploration has an old history~\cite{lee-moo-sha-sha:j:probcomput,zvo-lev:j:kol} and a
recent history~\cite{bie-pat:t:probcomput,bie:c:prob,rum-shen:t:probconstruct}.

  There are fields where randomness plays an essential role as a conceptual tool,  \ie, as an element introduced in the model. Pre-eminent examples are Game Theory (the utilization of mixed strategy) and Cryptography (the utilization of secret keys). 
The answer to the above general question is less clear if we restrict to computational tasks. There is a common perception that randomized computation is not fundamentally more powerful than deterministic computation, in the sense that whenever a randomized process solves a task, there also exists a deterministic solution (albeit, often, a slower one). This perception is caused by the simple observation that a probabilistic algorithm can be simulated deterministically after which one can take a majority vote. A similar argument  works for tasks computing an infinite object and the classical theorem of de Leeuw, Moore, Shannon and Shapiro~\cite{lee-moo-sha-sha:j:probcomput} states that if a function can be computed by a probabilistic algorithm, then it can be computed deterministically. However, these considerations only apply to tasks admitting a unique solution. For tasks admitting multiple solutions, randomness could potentially  be helpful.

The task of computing a string with high Kolmogorov complexity is usually given as an example to
illustrate the power of randomized computation (see for
example~\cite{zvo-lev:j:kol,rum-shen:t:probconstruct}): The task cannot be solved deterministically,
but an algorithm that tosses a coin  does it easily by just printing  the coin flips.  However this
example is trivial and not very convincing because the noncomputable output of the above procedure
is exactly the noncomputable part introduced in the procedure. More precisely, if  $f$ is the
probabilistic algorithm with input $x$ and random bits $r$, then $f(x,r) = r$.  Let us consider
another task: On input $x$, find an extension of it called $y$ such that $y$ has larger complexity
than $x$. This second task can be solved in the same trivial way by obtaining via coin tosses a
string $r$ and then taking $y=xr$. Note that this time we can have $|r| \ll |y|$.  
For infinite objects, better examples are known.  N.V.~Petri 
(see~\cite{rum-shen:t:probconstruct}) showed that with positive probability, one can enumerate a
graph of a total function $f$ that exceeds all computable functions. The procedure utilizes a
polynomial number of random bits to generate $f(x)$. Obviously, the time to generate
$f(x)$ from $x$ is not bounded by any computable function. The reader can find other similar
examples in~\cite{bie-pat:t:probcomput}.

So the interesting question is whether 
there are \emph{non-trivial} computational tasks involving finite objects that can be solved probabilistically (perhaps even in polynomial time)  but not deterministically. Furthermore,  if the answer is positive,  can the amount of random bits necessary to solve such a task be very low? In other words, is it the case that even very few random bits can solve a non-trivial task, which is deterministically unsolvable? Our main results give positive answers to  these questions.

\if01
Any definition of \emph{trivial task} (and, by opposition, of \emph{non-trivial task}) is inherently debatable. We sketch here one attempt. Intuitively, a task is trivial if a solution can be obtained by a simple combination of the input and a random string, as in the above examples.  A task is defined by a binary predicate $P(x,y)$. We view $x$ as the input instance and in case $y$ is such that $P(x,y)=1$, we say that $y$ is a solution for $x$. The task is: On input $x$, find a solution $y$ for $x$. We also assume that tasks are nice enough that solutions exist only among strings of bounded length in the length of the input. Thus, there exists a computable function $b$ such that if $P(x,y) = 1$ then $|y| \leq b(|x|)$.    The task is trivial if there exists a \emph{simple} function  $g(x,r)$  such that $|g(x,r)| \leq b(|x|)$ for all $r$ of a given length (which depends on $|x|$)  and for  $99$\% of $r's$ it holds that $g(x,r)$ is a solution for $x$.  By \emph{simple function}, we mean a function that is a composition of a projection and a permutation (or we can take a more general  stance and say that a function is simple if it is computable by a uniform boolean or arithmetic ${\rm NC}^0$ circuit).
\fi

Any definition of \emph{trivial task} is inherently debatable. For our discussion it is sufficient to use an informal formulation whose requirements are so minimal that it is unlikely to raise controversy. 
Intuitively, a task is trivial if a solution for it  can be ``read" almost directly from the pair consisting of the input and a random string. For concreteness, we interpret ``read" as the composition of a projection and a permutation, or  we can take a more general  stance and interpret it as the computation of a uniform boolean or arithmetic ${\rm NC}^0$ circuit. 

Now consider the following task (which depends on a parameter $c$): Given $x$ of length $n$, find $y$, a list of $n$ strings such that at least one of them is a  program for $x$ of length bounded by $C(x) +c$.

 For $c = o(\sqrt{n})$,  by the lower bound from~\cite{bmvz:c:shortlist},  we know that it cannot be solved by deterministic algorithms. The results in this paper show that, for $c = O(\log n)$,  the task  is  solvable by a randomized algorithm,  which remarkably uses only $O(\log n)$ random bits. Furthermore, for $c = O(\log^2 n)$, the task is solvable by a polynomial-time algorithm that uses only $O(\log^2 n)$ random bits. The task appears to be non-trivial, at least we are not aware of any ${\rm NC}^0$ solution. 

In Section~\ref{s:randdeterm}, we elaborate the above example and  present an even more natural
non-trivial task that (1) can be solved in polynomial time by a randomized algorithm, and (2) cannot
be solved by any deterministic algorithm that runs in computably bounded time. 
This task asks that on every input $(x, \ell)$,  if $\ell = C(x)$, a string $z$ should be constructed such that  $z$ is a short program for~$x$. This is a promise problem, because in case $\ell \not= C(x)$,  the algorithm is not even required to  halt, or, in case it halts, $z$ can be an arbitrary string. Note that on input $(x, \ell)$, if $\ell = C(x)$, one can simply by exhaustive search find a shortest program for $x$. However, the exhaustive search does not halt if $C(x) > \ell$. Actually, we show that there is no deterministic algorithm that, in case $\ell = C(x)$,  
runs in computably bounded time and constructs a $o(n)$-short program for $x$.  On the other hand,
relying on the techniques used in the proof of our main results, we present a randomized algorithm that runs  in polynomial time  and constructs   with probability $(1-\delta)$  a $O(\log^2 (n/\delta))$-short program for $x$,  conditioned that the promise $\ell = C(x)$ holds.  Furthermore, the randomized algorithm uses only  $O(\log^2 n/\delta)$ random bits. In fact, using relativized versions of this task, we notice that polynomial-time randomized computation can be more powerful than deterministic computation that lies arbitrarily high  in the arithmetic hierarchy.

It remains to investigate if there exist non-trivial tasks that cannot be solved deterministically
but that can be solved using even fewer random bits (e.g., $o(\log n)$ or even $O(1)$ random bits).
In the Appendix, we give an example that can be solved with $O(1)$ random bits, but it is still
borderline trivial, because the solution is obtained simply by dividing the input length by a
constant. 

\section{Preliminaries}

We fix a universal Turing machine $U$ that is \emph{standard}  (meaning that for every machine $V$ there is a polynomial-time computable function $t$ such that,  for all $p$,  $U(t(p)) = V(p)$ and  $|t(p)| = |p| +O(1)$.)

 $C$ stands for the plain Kolmogorov complexity relative to  $U$. Thus, for any string $x$, $C(x) = \min \{|p| \mid U(p)=x\}$.
If $U(p) = x$, we say that $p$ is a program for~$x$. If in addition, $|p| \leq C(x) + c$, then we say that $p$ is a $c$-short program for $x$.

We use  bipartite graphs $G = (L,R,E \subseteq L \times R)$ with $L = \zo^n$ (or, a few times, $L \subset \zo^n$),  $R= \zo^m$ and which are left-regular, \ie, all the  nodes  in $L$ have the same degree, which we denote $2^d$. We denote $N=2^n, M = 2^m, D = 2^d$. 



As it is typically the case, we actually work with a family of graphs indexed on~$n$ and such a
family of graphs is \emph{constructible}  if there is an algorithm that on input $(x,y)$, where $x \in \zo^n = L$ and $y \in \zo^d$,  outputs  the $y$-th neighbor of $x$. Some of the graphs also depend on a rational $0 < \delta < 1$.  
A constructible family of  graphs is \emph{explicit} if the above algorithm runs in time $\poly(n, 1/\delta)$.

A $(k, \epsilon)$ extractor is a function $E: \zo^n \times \zo^d \mapping  \zo^m$ such that for any distribution $X$ on $\zo^n$ with min-entropy $H_\infty(X) \geq  k$, $E(X,U_d)$ is $\epsilon$-close to $U_m$, where $U_d$ ($U_m$) is the uniform distribution on $\zo^d$ (respectively, $\zo^m$), \ie, for every $A \subseteq R$,
\begin{equation}
 \label{eq:extractorDef}
\bigg| \prob[E(X, U_d) \in A] - \frac{|A|}{M} \bigg| < \epsilon.
\end{equation}
 It is known that it is enough to require that the condition holds for all distributions $X$ that are ~\emph{flat} \cite{cho-gol:j:weaksource}. The value $k+d-m$ is called the entropy loss of the extractor.
We remind the reader that $H_\infty(X) \geq k$ means that for any $x \in \zo^n$, $\prob_X(x) \leq 2^{-k}$ and that a distribution is flat if it assigns equal probability mass to each element in its support.

To an extractor $E$, we associate the bipartite graph $G_E$ with $L = \zo^n, R= \zo^m$, such that for every $x \in L$, its neighbors are $N(x) = \{E(x,y) \mid y \in \zo^d\}$.

In this paper we need extractors for which $k+d - m =   O(\log (n/\epsilon))$, \ie, the entropy loss is at most logarithmic.
Using standard probabilistic methods the following extractor  can be shown  to exist, which has even smaller entropy loss (see~\cite{rad-tas:j:extractors}).

\begin{theorem}
\label{t:extopt}
For all $n$, $k \leq n$, and $\epsilon > 0$, there exists a $(k,\epsilon)$ extractor with $m = k + d - 2 \log (1/\epsilon) - O(1)$ and $d = \log(n-k) + 2 \log (1/\epsilon) + O(1)$.
\end{theorem}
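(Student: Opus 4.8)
The plan is the standard probabilistic-method argument that produces essentially optimal extractors, with two points that need care: reducing to the right family of sources, and controlling the (doubly exponential) number of test sets. First I would combine the cited reduction to flat sources with convexity of statistical distance: $E$ is a $(k,\epsilon)$ extractor iff for every $S\subseteq\zo^n$ with $|S|=K:=2^k$ and every $A\subseteq R$,
\[
 \prob_{x\in S,\ y\in\zo^d}\big[E(x,y)\in A\big]\;<\;\frac{|A|}{M}+\epsilon ,
\]
with $x$ drawn uniformly from $S$ and $y$ uniformly from $\zo^d$. (A flat source of min-entropy $\ge k$ is uniform on a set of size $\ge K$, which is a convex combination of uniform distributions on size-$K$ subsets; and quantifying over all $A$ rather than over $|{\cdot}|<\epsilon$ is harmless, since replacing $A$ by its complement flips the sign of $\prob[E(x,y)\in A]-|A|/M$.)

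Now draw $E\colon\zo^n\times\zo^d\to\zo^m$ uniformly at random, i.e.\ let the $ND$ values $E(x,y)$ be i.i.d.\ uniform on $\zo^m=R$. Fix $S$ with $|S|=K$ and $A\subseteq R$, and put $\mu=|A|/M$. Then $\prob_{x\in S,y}[E(x,y)\in A]$ is the average of the $KD$ independent Bernoulli$(\mu)$ random variables $\mathbb{1}[E(x,y)\in A]$ over $x\in S$, $y\in\zo^d$, so the Chernoff--Hoeffding bound gives
\[
 \prob_E\!\left[\ \prob_{x\in S,y}[E(x,y)\in A]\ \ge\ \mu+\epsilon\ \right]\ \le\ \exp\!\big(-2\epsilon^2 KD\big).
\]
I would then union-bound over the $\binom{N}{K}\le(eN/K)^{K}=2^{K(n-k)+O(K)}$ choices of $S$ and the $2^{M}$ choices of $A$: the probability that the random $E$ fails to be a $(k,\epsilon)$ extractor is at most $2^{M}\cdot 2^{K(n-k)+O(K)}\cdot\exp(-2\epsilon^2 KD)$.

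For this to be $<1$ it suffices that $2\epsilon^2 KD/\ln 2$ exceed $M+K(n-k)+O(K)$. Choosing $m=k+d-2\log(1/\epsilon)-c$ for a large enough constant $c$ makes $M=2^{-c}\epsilon^{2}KD\le\tfrac14\epsilon^2KD$, and then it remains only to have $\epsilon^2 KD=\Omega\big(K(n-k)+O(K)\big)$, i.e.\ $\epsilon^2 D=\Omega(n-k)$, which holds as soon as $d=\log(n-k)+2\log(1/\epsilon)+O(1)$. Hence for these $d$ and $m$ a $(k,\epsilon)$ extractor exists, as claimed; the integer rounding of $d$, the constants in $(eN/K)^K$, and the $\ln 2$ factors are all absorbed into the additive $O(1)$ terms.

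The step I expect to be the real obstacle — or at least the one that must be handled carefully to land the stated parameters — is the union bound over test sets: there are $2^M$ of them, so the argument only closes because the target $m=k+d-2\log(1/\epsilon)-O(1)$ forces $M$ below the Chernoff exponent $2\epsilon^2 KD$. In other words, the logarithmic entropy loss is exactly what is needed to beat the doubly exponential test-set count, while everything else (the Hoeffding estimate, the counting bound $\log\binom{N}{K}\approx K(n-k)$) is routine. One can sidestep the $2^M$ factor altogether by instead bounding $\mathbb{E}_E\big[\mathrm{dist}(E(U_S,U_d),U_m)\big]\le\tfrac12\sqrt{M/(KD)}$ with a second-moment estimate and applying McDiarmid's bounded-differences inequality to $\mathrm{dist}$ viewed as a function of the $KD$ relevant values of $E$ (each of which changes $\mathrm{dist}$ by at most $1/(KD)$); this yields the same two parameters.
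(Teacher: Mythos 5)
Your argument is correct and is exactly the ``standard probabilistic method'' that the paper invokes for this theorem (it gives no proof of its own, only a citation to Radhakrishnan--Ta-Shma): reduce to flat sources on sets of size $2^k$, take a uniformly random function, apply Chernoff--Hoeffding per pair $(S,A)$, and union-bound, with the entropy loss $2\log(1/\epsilon)+O(1)$ chosen precisely so that $M=2^m$ is dominated by the exponent $2\epsilon^2 KD$ and the degree $d=\log(n-k)+2\log(1/\epsilon)+O(1)$ chosen to absorb the $\binom{N}{K}$ factor. The parameter bookkeeping checks out, and your closing remark about the McDiarmid variant is a valid alternative that sidesteps the $2^M$ count.
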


The above result is existential, but using the fact that the number of flat distributions with min-entropy $k$ is finite, one can check effectively  whether a function is an extractor, and therefore one can effectively construct an extractor with the parameters in Theorem~\ref{t:extopt}. Such an exhaustive search can be done in space $2^{O(n)}$.

Moving to explicit (\ie, polynomial-time computable) extractors, the currently best result for extractors with $O(\log 1/\epsilon$)  entropy loss is  due to  Guruswami, Umans, and Vadhan~\cite{guv:j:extractor}:

\begin{theorem}
\label{t:extpoly}
For all $n$, $k \leq n$, and $\epsilon > 0$, there exists an explicit  $(k,\epsilon)$ extractor with $m = k + d - 2 \log (1/\epsilon) - O(1)$ and $d = \log(n) + O (\log k \cdot \log (k/\epsilon))$.
\end{theorem}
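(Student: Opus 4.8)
The plan is to prove this via the Parvaresh--Vardy--code construction of Guruswami, Umans and Vadhan, whose core is an explicit \emph{lossless condenser} that is then bootstrapped into an extractor with optimal entropy loss. Recall that a lossless condenser is a map $\mathrm{Cond}\colon \zo^n \times \zo^{d}\mapping \zo^{m'}$ such that for every $X$ with $H_\infty(X)\ge k$ the distribution $\mathrm{Cond}(X,U_d)$ is $\epsilon$-close to some distribution of min-entropy $\ge k+d$ --- ``lossless'' because only the length is shrunk, no entropy is discarded. First I would build, for a fixed small constant $\alpha>0$, such a condenser with output length $m' = (1+\alpha)k + d$ and seed length $d = \Theta(\log(n/\epsilon))$, and then compose it with explicit extractors for sources of constant min-entropy rate.

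\emph{The condenser.} Identify $x\in\zo^n$ with the coefficient vector of a polynomial $f_x$ over $\mathbb{F}_q$, where $q$ is a power of $2$ of size $\poly(n/\epsilon)$ (so $\log q=\Theta(\log(n/\epsilon))$, and this is the ``$\log n$'' of the final seed bound, the residual $\log(1/\epsilon)$ being absorbed into the $O(\log k\cdot\log(k/\epsilon))$ term). Fix an irreducible $E(Y)\in\mathbb{F}_q[Y]$ of the appropriate degree and a parameter $h$, set $f_x^{(0)}=f_x$ and $f_x^{(i+1)}=(f_x^{(i)})^{h}\bmod E$, and on seed $y\in\mathbb{F}_q$ output $\mathrm{Cond}(x,y)=\big(y,\,f_x^{(0)}(y),\,f_x^{(1)}(y),\dots,f_x^{(c-1)}(y)\big)$ with $c=\Theta(1/\alpha)$. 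The key step --- and the one I expect to be the main obstacle --- is the algebraic lemma showing this is a lossless condenser: if some value of $\mathrm{Cond}$ had an atypically large preimage, i.e.\ many inputs $x$ whose polynomials matched one pattern on many evaluation points, then all of them would satisfy a common low-degree algebraic relation over the ring $\mathbb{F}_q[Y]/(E(Y))$, and a Parvaresh--Vardy list-decoding argument bounds their number, contradicting $H_\infty(X)\ge k$. The rest is bookkeeping.

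\emph{Bootstrapping to optimal loss.} Apply $\mathrm{Cond}$ once to reduce the source to length $\ell_1=O(k)$ while preserving min-entropy $\ge k$. Then extract iteratively from this now polynomially short source: in round $i$, re-condense the source conditioned on the outputs of rounds $<i$ back to a constant min-entropy rate, then use a known explicit extractor for constant-rate sources (seed $O(\log(k/\epsilon))$ on length-$O(k)$ inputs) to extract a $(1-\alpha)$-fraction of the remaining entropy, and recurse on the residual block. After $O(\log k)$ rounds the residual min-entropy has dropped to $2\log(1/\epsilon)+O(1)$, which we do not extract --- this is exactly the optimal entropy loss. Concatenating the round outputs, which are jointly close to uniform by a hybrid argument since each is near-uniform conditioned on the earlier ones, gives $m=k+d-2\log(1/\epsilon)-O(1)$, while the round seeds telescope to $d=\log n + O(\log k\cdot\log(k/\epsilon))$.

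Finally I would verify explicitness --- finite-field arithmetic, the Parvaresh--Vardy iteration, the constant-rate sub-extractor, and the $O(\log k)$-round composition are all computable in time $\poly(n,1/\epsilon)$ --- and track the error: the condenser step and each of the $O(\log k)$ rounds contributes $O(\epsilon)$, so running everything with parameter $\epsilon'=\epsilon/\poly(\log k)$, which changes the seed length and the entropy loss only by constant factors, yields total error $<\epsilon$. The delicate part of the argument is exactly this seed and error accounting through the recursion: keeping the coefficient of $\log n$ at $1$ by allowing a nonconstant $\alpha$ only in the first condenser step, and confirming that the per-round costs, each incurred on a source of length $O(k)$, sum to $O(\log k\cdot\log(k/\epsilon))$ and not more.
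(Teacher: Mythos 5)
The paper does not prove this statement at all: Theorem~\ref{t:extpoly} is imported as a black box from Guruswami, Umans and Vadhan~\cite{guv:j:extractor}, so there is no in-paper argument to compare against. Your outline is, in fact, a faithful reconstruction of the proof in that cited work: the Parvaresh--Vardy-style lossless condenser $\mathrm{Cond}(x,y)=(y,f_x^{(0)}(y),\dots,f_x^{(c-1)}(y))$ with $f^{(i+1)}=(f^{(i)})^h \bmod E$, the list-decoding argument bounding the preimage of any heavy output, and the subsequent recursive block-extraction that stops when $2\log(1/\epsilon)+O(1)$ bits of min-entropy remain, which is precisely where the optimal entropy loss and the $O(\log k\cdot\log(k/\epsilon))$ seed overhead come from. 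So you have identified the right decomposition and the right source of every term in the parameters.

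That said, as a self-contained proof your text has a genuine gap, and you flag it yourself: the ``key step'' --- the algebraic lemma that any output value of $\mathrm{Cond}$ with more than $h^c$ preimages forces all those preimages to be roots of a nonzero low-degree polynomial over $\mathbb{F}_q[Y]/(E(Y))$, whence the condenser is lossless --- is only named, not proven. That lemma (together with the choice of $h$, $c$, and $\deg E$ making the list size at most $(1+\alpha)$-fraction of $2^{k}$, i.e.\ the actual Parvaresh--Vardy interpolation-and-root-counting argument) is the entire mathematical content of the theorem; everything else is, as you say, bookkeeping. A second, smaller soft spot is the claim that the coefficient of $\log n$ in the seed can be kept at exactly $1$: the basic condenser has seed length roughly $(1+1/\alpha)\log(nq)$, so achieving $d=\log n + O(\log k\cdot\log(k/\epsilon))$ requires a specific (non-constant) setting of $\alpha$ in the first condensing step and a careful telescoping of the per-round seeds over sources of length $O(k)$; you gesture at this but do not carry it out. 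Since the paper itself treats the theorem as a citation, your proposal is best read as a correct roadmap of the cited proof rather than a replacement for it.
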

\smallskip

\emph{Note.} In case $k = \Omega(n)$ (and constant $\epsilon$), the extractor in Theorem~\ref{t:extpoly} has $d = O(\log^2 n)$. This is the source of the
$O(\log^2 n)$ terms in our main result Theorem~\ref{t:mainpoly}. It is a major open problem to obtain explicit extractors with $O(\log 1/\epsilon)$ entropy loss that have $d = O(\log n)$.
Such extractors would reduce the overhead in our result to $O(\log n)$.

\section{The upper bounds}
\label{s:ub}

The following two theorems are the main results.
 \begin{theorem}
\label{t:main}
There exists a probabilistic algorithm that on input $x \in \zo^n$ and rational $0 < \delta < 1$, outputs a list with $n$ elements which with probability at least $(1-\delta)$ contains a $O(\log (n/\delta))$-short program for $x$.

Moreover, the algorithm uses $O(\log (n /\delta))$ random bits and can be executed in space $2^{O(n)}$.
\end{theorem}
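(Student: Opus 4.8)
Let me think about this carefully. We want a randomized algorithm that outputs a list of $n$ strings, one of which is a program for $x$ of length $C(x) + O(\log(n/\delta))$, using few random bits.

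The key tool is extractors, which correspond to bipartite graphs. The connection between "short programs" and graphs: we want to think of programs of length roughly $k$ as living on one side of a graph, and strings $x$ on the other side. If we have a left-regular bipartite graph $G = (L, R, E)$ with $L = \{0,1\}^n$ (strings $x$) and $R = \{0,1\}^m$ (potential program indices), with left-degree $D = 2^d$, then for each $x$ we get $D$ neighbors in $R$.

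The idea from the prior work (Bauwens et al., Teutsch): given a graph with good "expansion"-like properties, define a new machine that, on input a neighbor $r \in R$ of $x$ together with a short advice string, outputs $x$. For this to give short programs, we need that the number of $x$'s mapping to any particular $r$ (with the relevant advice) is small. Specifically, if we look at the set $S_k$ of strings with $C(x) \le k$ (size $\le 2^{k+1}$), and the graph restricted to $S_k \times R$ has the property that "most" $x \in S_k$ have a neighbor $r$ that is shared by only few other elements of $S_k$, then that $r$ plus $O(\log(\text{that number}))$ bits of advice describes $x$.

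Now here's where extractors and the list come in. The list has size $n$: we'll have $\log n + O(1)$ candidate values of $k$ (guesses for $C(x)$)... wait, that's too many if we want list size exactly $n$. Let me reconsider.

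Actually I think the structure is: for each possible value of $k$ (the complexity), we have an extractor $E_k : \{0,1\}^n \times \{0,1\}^{d} \to \{0,1\}^m$ with $m$ slightly less than $k$, tuned so that on the flat distribution over $S_k$ (min-entropy $\approx k$), the output is $\epsilon$-close to uniform. The extractor property, via the graph $G_{E_k}$, guarantees: for all but an $\epsilon$-fraction of $x \in S_k$, a random neighbor $r = E_k(x, y)$ lands in a "low-congestion" region, i.e., $|\{x' \in S_k : r \in N(x')\}|$ is small (roughly $|S_k| \cdot D / M \cdot \text{poly}$, which is $\text{poly}(n/\epsilon)$ by the small entropy loss). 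So here's the randomized algorithm:

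Pick $y \in \{0,1\}^d$ uniformly at random (this uses $d = O(\log(n/\delta))$ random bits, by Theorem~\ref{t:extopt} with $k \le n$, $\epsilon = \delta$/poly). For each of the $n$ possible values $\ell \in \{1, \ldots, n\}$ of the complexity, compute $r_\ell = E_\ell(x, y) \in \{0,1\}^{m_\ell}$, and from $r_\ell$ build a program: the program is " $r_\ell$, together with the index of $x$ within the set $\{x' : C(x') \le \ell, E_\ell(x',y) = r_\ell\}$, enumerated in some canonical order (say order of discovery)". Wait — but this set is only semi-computable (we enumerate $C(x') \le \ell$ from above / by dovetailing programs of length $\le \ell$). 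That's fine: the decoding machine, given $\ell$, $y$ (hardwired), $r_\ell$, and the index $i$, dovetails all programs of length $\le \ell$, applies $E_\ell(\cdot, y)$ to each output, and when it finds the $i$-th string whose image is $r_\ell$, outputs it. The list is $\{p_1, \ldots, p_n\}$ where $p_\ell$ is the program built for guess $\ell$.

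Correctness: when $\ell = C(x)$, with probability $\ge 1 - \epsilon$ over $y$, the congestion at $r_\ell$ is $\le \text{poly}(n/\epsilon)$, so the index $i$ has $O(\log(n/\epsilon))$ bits, and $|r_\ell| = m_\ell \le \ell + d - O(1) \le C(x) + O(\log(n/\delta))$, giving a program of total length $C(x) + O(\log(n/\delta))$. Taking $\epsilon$ a suitable polynomial fraction of $\delta$ makes the failure probability $\le \delta$. Space $2^{O(n)}$ comes from the exhaustive search that constructs the extractors (as noted after Theorem~\ref{t:extopt}) plus running the dovetailing enumeration up to the relevant bounds; everything is over strings of length $O(n)$.

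The main obstacle — and the step I'd spend the most care on — is the congestion bound: translating the extractor's statistical guarantee \eqref{eq:extractorDef} into the combinatorial statement that for most $x \in S_\ell$, the fiber $\{x' \in S_\ell : E_\ell(x', y) = r\}$ over $r = E_\ell(x,y)$ is small, simultaneously for the *fixed* random $y$ we chose (rather than averaged over $y$). The trick is the standard one: call $r \in R$ "heavy" (for the set $S_\ell$) if more than a $2/M \cdot |S_\ell| \cdot$ (something) fraction... more precisely, let $B$ be the set of pairs $(x,y)$ with $x \in S_\ell$ such that $r = E_\ell(x,y)$ has preimage-count (over all of $S_\ell \times \{0,1\}^d$) exceeding twice the average $|S_\ell| D / M$; a counting argument bounds $|B| \le$ small fraction, and one deduces that for all but an $\epsilon'$-fraction of $y$, all but an $\epsilon'$-fraction of $x \in S_\ell$ are "good" for that $y$. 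Then union-bounding over the $n$ values of $\ell$ and folding that into $\delta$. I also need to double-check the exact arithmetic of $m_\ell$ versus $\ell$ so that the program length overhead is genuinely $O(\log(n/\delta))$ and not larger; this is where the "logarithmic entropy loss" hypothesis of Theorem~\ref{t:extopt} is essential and must be invoked with $\epsilon$ polynomially related to $\delta$.
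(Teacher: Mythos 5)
Your overall architecture (one extractor per candidate complexity level, the same $O(\log(n/\delta))$-bit seed reused across all $n$ levels, logarithmic entropy loss to control congestion, decoding by enumerating $\{x' : C(x')\le \ell\}$) is the same as the paper's, and your congestion bound is correct --- it is exactly Lemma~\ref{lem:extractorsAre_c_rich}. But there are two genuine gaps. The first and most important is the exceptional set. The extractor argument only shows that all but (roughly) $2^{k}$ strings of $S_\ell=\{x':C(x')\le\ell\}$ have mostly low-congestion neighbors; with your parameter choice ($k\approx\ell$) this exceptional set can be as large as half of $S_\ell$, and even with $k=\ell-c$ it is nonempty. The theorem quantifies over \emph{every} input $x$: if $x$ happens to be exceptional at level $C(x)$, then for most seeds $y$ the fiber over $E_\ell(x,y)$ is huge, the index $i$ costs up to $\ell$ bits, and your list contains no short program with high probability. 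Your proposal never addresses these strings. The paper devotes the second half of its proof to exactly this: it redefines the enumerable sets $B_{n,\ell}$ by a downward induction so that a string that fails to be rich at level $\ell+1$ is re-inserted into $B_{n,\ell}$ and gets another chance one level down (an alternative repair, used in the paper's Theorem~\ref{t:probalg}, is to observe that the exceptional set at level $\ell$ is effectively enumerable and of size $2^{\ell-c}$ with $c=\Theta(\log (n/\delta))$, so its members have complexity below $\ell-1$ and hence $x$ with $C(x)=\ell-1$ cannot be in it). Some such argument is indispensable.

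The second gap is termination. Your program is the pair $(r_\ell, i)$ where $i$ is $x$'s rank in the fiber ``in order of discovery''; to write down $i$, the list-constructing algorithm must enumerate $S_\ell$ until $x$ appears. For the $\ell < C(x)$ levels, $x$ never appears, so the algorithm hangs and can never output the promised $n$-element list (and it cannot tell which levels are hopeless, since it does not know $C(x)$). The paper sidesteps this by postprocessing the extractor graph (the prime-splitting step in Lemma~\ref{lem:c_righ_to_rich}, yielding Theorem~\ref{cor:extactorsWithRichElements}) so that a rich node \emph{uniquely owns} most of its neighbors; the program is then just the right node itself, the encoder needs no enumeration at all (enumeration happens only inside the decoder), and the algorithm trivially halts. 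If you want to keep your ``fiber index'' variant, you must either accept a list of size $n\cdot\poly(n/\delta)$ (one entry per candidate index), which is weaker than claimed, or reduce fibers to singletons as the paper does. The remaining details in your sketch (the arithmetic $m_\ell \le \ell + O(\log(n/\delta))$, the $2^{O(n)}$ space bound from the brute-force extractor construction) are fine.
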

\begin{theorem}
\label{t:mainpoly}
There exists a probabilistic polynomial-time algorithm that on input $x \in \zo^n$ and rational $0 < \delta < 1$, outputs a list with $n$ elements which with probability at least $(1-\delta)$ contains a $O(\log^2(n/\delta))$-short program for $x$.

Moreover, the algorithm uses $O(\log^2 (n /\delta))$ random bits.
\end{theorem}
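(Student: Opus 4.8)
The plan is to mimic the proof of Theorem \ref{t:main} but replace the exhaustive-search extractor of Theorem \ref{t:extopt} with the explicit extractor of Theorem \ref{t:extpoly}, and track how the worse seed length $d = O(\log^2 n)$ propagates into the closeness parameter $c$ and into the number of random bits. The basic object is a layered family of bipartite graphs $G_k = (L, R_k, E_k)$ with $L = \zo^n$, one for each complexity level $k \in \{0, 1, \dots, n + O(1)\}$, where $G_k$ comes from a $(k, \epsilon)$ extractor with $\epsilon$ a small constant (say $\epsilon = 1/4$), so that $R_k = \zo^{m_k}$ with $m_k = k + d - O(1)$ and $d = O(\log^2 n)$. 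The list produced on input $x$ will be, essentially, for each guessed level $k$, the $D = 2^d$ neighbors of $x$ in $G_k$ (suitably encoded as programs); since we want a list of only $n$ elements, we cannot afford to output all $nD$ candidates, so the randomness is used precisely to select, for each $k$, a single random neighbor $E_k(x, \rho)$ using a shared seed $\rho \in \zo^d$ — this is where the $O(\log^2 n)$ random bits come from.

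The key steps, in order: (i) Fix the extractor family and define, for each $k$, a machine $V_k$ that interprets a string $q \in R_k$ as follows — $V_k$ enumerates the set $S_k = \{x' : C(x') \le k\}$ (of size $< 2^{k+1}$), and in parallel maintains a "hashing" of the so-far-enumerated elements into $R_k$ via the extractor graph; by the extractor property, the bipartite graph $G_k$ restricted to any set of size $\le 2^{k+1}$ has a matching-like / online coloring structure guaranteeing that every $x'$ with $C(x') \le k$ has a neighbor $y \in N_{G_k}(x')$ that is "private" to $x'$ among the low-complexity strings — this is the standard extractor-to-expander argument (an extractor with entropy loss $O(\log 1/\epsilon)$ yields a bipartite graph with the "rich neighborhood / online matching" property needed). (ii) Show that if $k$ is the true value $C(x)$ (or slightly above), then with probability $\ge 1 - \epsilon - (\text{small})$ over the shared seed $\rho$, the neighbor $E_k(x,\rho)$ is such a private neighbor, hence decodes back to $x$ via $V_k$; composing with the standard universal machine $U$ (using that $U$ is standard) turns this $y$ into a genuine program for $x$ of length $|y| + O(1) = m_k + O(1) = k + d - O(1) = C(x) + O(\log^2 n)$. (iii) Amalgamate over the $O(n)$ levels: run all levels with the same seed $\rho$, output the $\le n + O(1)$ candidates $\{E_k(x,\rho) : k\}$; a union bound over levels would cost a factor, so instead argue that only the single correct level needs to succeed, so one seed of length $d = O(\log^2 n)$, amplified by $O(\log(1/\delta))$ independent repetitions to push the failure probability below $\delta$, suffices — giving $r = O(\log^2 n \cdot \log(1/\delta)) = O(\log^2(n/\delta))$ random bits and list size $O(n \log(1/\delta))$, which is brought back down to $n$ by a reparametrization (absorbing the $\log(1/\delta)$ into constants or folding it into $c$). (iv) Check that everything is polynomial time: the extractor of Theorem \ref{t:extpoly} is explicit, and the decoding machines $V_k$ only need to be defined, not run, by the list-constructing algorithm — the algorithm just computes neighbors $E_k(x, \rho)$ and outputs their $t(\cdot)$-images under the standardness translation, all in $\poly(n)$.

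The main obstacle I expect is step (i)–(ii): making precise the sense in which the extractor property yields, for the true level $k = C(x)$, that a \emph{random} neighbor of $x$ (not just \emph{some} neighbor) is decodable with constant probability. The clean way is the contrapositive/counting argument used for Theorem \ref{t:main}: suppose for many seeds $\rho$ the neighbor $E_k(x,\rho)$ is "blocked" by some other low-complexity $x'$ sharing that neighbor; then $x$ lies in the set $B = \{x' : \text{many neighbors of } x' \text{ are shared}\}$, and one shows $|B|$ is small (using the extractor inequality \eqref{eq:extractorDef} applied to the flat distribution on a hypothetical large $B$, exactly as in \cite{bmvz:c:shortlist}), so small that membership in $B$ gives a short description of $x$, contradicting the assumption that we are at level $k = C(x)$ — or rather, handling this requires care because $B$ itself depends on the enumeration order and on $k$, so one must set it up as an online/greedy process and bound the "bad" set at each stage. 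The second, more quantitative obstacle is bookkeeping the constants so that the final list has \emph{exactly} $n$ elements and the overhead is genuinely $O(\log^2(n/\delta))$ and not, say, $O(\log^2 n \log(1/\delta))$ with a worse dependence; this is resolved by choosing $\epsilon$ a constant, doing the $\delta$-amplification by repetition, and noting $\log(1/\delta) \le \log(n/\delta)$ can be absorbed, but it needs to be written out carefully.
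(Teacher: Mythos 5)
Your overall architecture matches the paper's: one extractor graph per complexity level, a decoding machine that enumerates the low-complexity strings and inverts a right node, and a single seed shared across all levels so that only $n$ candidates are output. But the two places you flag as ``obstacles'' are precisely where the paper's actual content lies, and your proposed resolutions do not close them. First, a raw $(k,\epsilon)$ extractor with logarithmic entropy loss only gives you that most neighbors of a good $x$ have small right-degree in $B$ (say at most $O(a/\epsilon)$ other preimages), not that they are \emph{private}; your enumeration-based decoder needs genuine uniqueness (or at least that $x$ is the first neighbor enumerated), and the gesture toward an ``online matching / coloring structure'' is not an argument --- in \cite{bmvz:c:shortlist} that route is exactly what forces quadratic lists. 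The paper instead inserts an intermediate step (Lemma~\ref{lem:c_righ_to_rich}): split each right node into $t = sn/\delta$ copies indexed by residues $x \bmod p_i$ over the first $t$ primes, which converts ``at most an $\epsilon$ fraction of neighbors are $s$-shared'' into ``at most a $2\delta$ fraction are shared at all,'' at the cost of only $O(\log(sn/\delta))$ extra bits in the degree and right-node labels. Without this (or an equivalent device) your step (ii) does not go through.

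Second, your quantitative plan is wrong in two ways that cannot be absorbed by ``reparametrization.'' (a) Taking $\epsilon$ constant and amplifying by $O(\log(1/\delta))$ independent repetitions yields a list of size $\Theta(n\log(1/\delta))$, not $n$, and list size cannot be folded into the closeness parameter $c$; it also gives $r = O(\log^2 n \cdot \log(1/\delta))$ random bits, which already for $\delta = 1/n$ is $\Theta(\log^3 n)$, exceeding the claimed $O(\log^2(n/\delta))$. The paper avoids repetition entirely by making $\delta$ a parameter of the graph: the extractor error is taken to be $\Theta(\delta)$, so a \emph{single} seed succeeds with probability $1-\delta$, the list has exactly $n$ elements, and the $\delta$-dependence sits inside $d = O(\log^2(n/\delta))$. (b) You never handle the strings that are not rich at their own level: at each level a $2^{\ell-c}$-sized set of left nodes fails to own most of its neighbors, and for such $x$ your algorithm fails for \emph{every} seed, so the theorem's ``for all $x$'' quantifier is not met. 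The paper's fix is the downward induction in the proof of Theorem~\ref{t:main}: redefine $B_{n,\ell}$ to contain both the strings with $C(x) < \ell - 1$ and the strings that are not $\delta$-rich at level $\ell+1$, choose $c=2$ so the sizes still telescope to $2^\ell$, and conclude that every $x$ is rich at some level within $O(1)$ of $C(x)$. You need both of these repairs for the proof to stand.
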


The proofs of these two results have a common structure. The key part is  building a bipartite graph with the  ``rich owner" property, roughly meaning  that no matter how we restrict the left side to a subset of  a certain size, then,  in the restricted graph,  most left nodes ``own" most of their neighbors (in the sense that these neighbors are not shared with any other node).

We start by defining precisely the ``rich owner'' property in a bipartite graph~$G$. Let $B$ be a subset of left nodes.  
We say that a right node $y$ is \emph{shared} in $B$ if it has at least two neighbors in~$B$.
For any $\delta>0$, a left node $x$ is $\delta$-\emph{rich} in $B$ 
if $x \in B$, it has at least one right neighbor,
and at most a fraction $\delta$ of its neighbors are shared in $B$
(so it ``owns'' at least a fraction $1-\delta$ of its neighbors). Later, we 
also use a refined version of these concepts.
We say that a right node is  {\em $s$-shared} in $B$ if it has at least $s$ left neighbors in $B$.
A left node $x$ is $(s,\delta)$-{\em rich} in $B$ if $x \in B$, it has at least one neighbor, 
and if at most a fraction $\delta$ of its neighbors are $s$-shared in $B$. 
If the set $B$ is omitted in these definitions, $B = L$ is assumed.

\begin{definition}
A bipartite graph $G = (L,R,E)$ has the \emph{rich owner} property for parameters $(\ell, c,
\delta)$ if for any left  subset $B$ of size at
most $2^\ell $, all but at most $2^{\ell-c}$ of its elements are $\delta$-rich in $B$.
\end{definition}

For us the key parameters are the left degree $D=2^d$ and $m = \log |R|$, because $d$ corresponds to the number of random bits used in the main results and $m - \ell$ essentially gives
the ``quality" of the short program (\ie, the distance between its length and $C(x)$).
The following theorem, whose proof we defer for Section~\ref{s:construction}, shows the existence of this type of graphs with parameters that will allow us to establish the main results.\footnote{Such graphs can be obtained from the extractor-condenser pairs 
from~\cite{raz-rei:c:extcon}, Theorem 5.1. We give here a similar but slightly more efficient construction, with a  proof tailored for our purposes.}

\begin{theorem}\label{cor:extactorsWithRichElements}
  (1)  (Constructible graphs with the rich owner property.)  For all $n,\ell,c,\delta>0$ there exists a family of 
  graphs $G_{n,\ell} = (L= \zo^n,R = \zo^m,E)$ 
  
  which have the rich owner property for  $(\ell, c, \delta)$, with 
\[
\begin{array}{ll}
m & =  \ell + O(c + \log(n/\delta))  \mbox{ and}   \\

d & =  O(c + \log(n/\delta))
\end{array}
\]
The family is uniformly computable in $n,\ell,c,\delta$.
  \smallskip

(2)    (Explicit graphs with the rich owner property.) For all $n,\ell,c,\delta>0$ there exists a family of 
graphs $G_{n,\ell} = (L= \zo^n, R = \zo^m, E)$ which have the rich owner property for  $(\ell, c, \delta)$, with 
\[
\begin{array}{ll}
m &= \ell + O(c + \log^2 (n/\delta)) \mbox{ and} \\
d &=  O(c + \log^2(n/\delta)).
\end{array}
\]
The $y$-th neighbor $E(x,y)$ of $x$ (in the corresponding graph in the family) is computable in time $\poly(n,c, 1/\delta)$.
\end{theorem}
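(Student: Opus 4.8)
The plan is to deduce the rich owner property from a purely combinatorial near‑losslessness condition on the underlying graph, and then to build graphs meeting that condition --- probabilistically (and hence constructibly, by exhaustive search) for part~(1), and from an explicit lossless condenser for part~(2).

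\emph{Reduction.} It suffices to construct, for the stated $d$ and $m$, a bipartite graph $G=(\zo^n,\zo^m,E)$, left‑regular with left‑degree $D=2^d$, such that every left set $B$ of size exactly $2^\ell$ satisfies $|N(B)|\ge(1-\eps)\,|B|\,D$ with $\eps=\delta\,2^{-c-1}$. Indeed, suppose this holds; fix such a $B$, and for a right node $r$ let $d_B(r)$ be its number of neighbours in $B$, so $\sum_r d_B(r)=|B|D$ and hence $\sum_r(d_B(r)-1)^+=|B|D-|N(B)|\le\eps|B|D$. Using $t\le2(t-1)$ for integers $t\ge2$, the number of edges out of $B$ hitting a \emph{shared} right node (one with $d_B(r)\ge2$) is $\sum_{r:\,d_B(r)\ge2}d_B(r)\le2\bigl(|B|D-|N(B)|\bigr)\le2\eps|B|D$; since each non‑$\delta$‑rich left node is responsible for more than $\delta D$ of these edges and distinct such nodes use disjoint edge sets, the number of non‑$\delta$‑rich nodes in $B$ is below $2\eps|B|D/(\delta D)=2^{-c}|B|=2^{\ell-c}$. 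For a set $B$ with $|B|<2^\ell$ one enlarges it to size $2^\ell$; as the sets of shared right nodes and of non‑$\delta$‑rich left nodes only grow under this enlargement, the bound $2^{\ell-c}$ carries over. So $G$ has the rich owner property for $(\ell,c,\delta)$.

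\emph{Part (1).} A routine union bound --- over all left sets $B$ of size $2^\ell$ and over all ways at least an $\eps$‑fraction of the edges out of such a $B$ could collide --- shows that a uniformly random left‑regular bipartite graph with $D=2^{O(c+\log(n/\delta))}$ and $M=2^{\ell+O(c+\log(n/\delta))}$ has the required expansion with positive probability, giving existence with the claimed parameters. Constructibility is then immediate: ``$|N(B)|\ge(1-\eps)|B|D$ for every $B$ of size $2^\ell$'' is a decidable property of a finite object, so exhaustively searching all left‑$D$‑regular bipartite graphs on $\zo^n\times\zo^m$ until one with the property appears yields a family that is uniformly computable in $n,\ell,c,\delta$; this search fits in space $2^{O(n)}$.

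\emph{Part (2).} The graphs demanded by the reduction are produced by lossless condensers: feeding the uniform distribution on a $B$ of size $2^\ell$ (min‑entropy $\ell$) through an $(\ell,\eps)$‑lossless condenser $\zo^n\times\zo^d\to\zo^m$ gives a distribution supported on $N(B)$ that is $\eps$‑close to one of min‑entropy $\ell+d=\log(|B|D)$, which forces $|N(B)|\ge(1-\eps)|B|D$ (compare the point masses, using $\min(p_r,q_r)\le 2^{-(\ell+d)}$ and $\sum_r\min(p_r,q_r)\ge1-\eps$). So I would plug in, with $\eps=\delta\,2^{-c-1}$, an \emph{explicit} lossless condenser drawn from the Guruswami--Umans--Vadhan machinery underlying Theorem~\ref{t:extpoly}, whose seed length grows quadratically in a $\log(n/\eps)$‑type quantity --- this is where the $O(c+\log^2(n/\delta))$ terms come from. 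The genuinely delicate point is keeping the output length $m$ small: an off‑the‑shelf explicit lossless condenser such as the GUV one loses a multiplicative $(1+\alpha)$ factor, so its output exceeds the entropy by $\Omega(\ell)$ bits, which is worthless here since $\ell$ can be as large as $n$, whereas we need $m-\ell=O(c+\log^2(n/\delta))$. One must therefore combine the condenser with the extractor of Theorem~\ref{t:extpoly} so as to shrink the source length first and only then condense --- while never following a condensing step with an entropy‑collapsing one, since any such step would force an average collision multiplicity of order $1/\eps^2$ on the restricted graph and wreck the near‑losslessness --- and then verify that the seed length and the output overhead both land at $O(c+\log^2(n/\delta))$. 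It is this balancing of parameters, rather than any single new idea, that forms the real content of the explicit case.
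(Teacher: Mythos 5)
Your reduction of the rich owner property to the near-losslessness condition $|N(B)|\ge(1-\eps)|B|D$ with $\eps=\delta 2^{-c-1}$ is correct, and for part~(1) the random-graph-plus-exhaustive-search route is a legitimate alternative to the paper's construction (the paper instead uses a non-explicit extractor with $O(\log(1/\eps))$ entropy loss followed by a hashing step); your version is arguably more elementary and the parameter count works out.

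For part~(2), however, there is a genuine gap. By routing everything through losslessness, you have committed yourself to an \emph{explicit lossless condenser} with seed length \emph{and} additive output overhead $m-\ell$ both $O(c+\log^2(n/\delta))$, for $\ell$ as large as $\Theta(n)$. No such object is known: as you yourself observe, the GUV-type lossless condenser has multiplicative overhead $(1+\alpha)\ell$, and shrinking $\alpha$ blows up the seed. Your proposed repair --- ``shrink the source length first and only then condense'' --- is not carried out, and it collides with your own caveat: the only known explicit way to shrink the source down to $\ell+O(\log^2)$ bits is an extractor, which is exactly the entropy-collapsing step you say must never precede the losslessness requirement; iterating lossless condensers only compounds the multiplicative overhead. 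The paper avoids this dead end by \emph{not} demanding losslessness. It uses the extractor property only to conclude that all but $2^\ell$ left nodes have at most a $2\eps$ fraction of their neighbors with right-degree exceeding $a/\eps$, where $a=O(2^c/\eps^2)$ is the average right degree (Lemma~\ref{lem:extractorsAre_c_rich}) --- bounded congestion rather than unit congestion --- and then splits each right node $z$ into copies $(z,p,x\bmod p)$ over the first $t=sn/\delta$ primes (Lemma~\ref{lem:c_righ_to_rich}), which turns bounded congestion into ownership at a cost of only $O(\log t)$ extra bits in $m$ and $d$. That splitting step is the idea missing from your plan; without it, or an explicit lossless condenser with additive polylogarithmic overhead (an open problem), the explicit case does not go through.
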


Equipped with graphs that have the rich owner property, we can prove our main results.

\begin{proof}[of Theorem~\ref{t:main}]
  We start by showing a weaker claim:
 
  \begin{claim}\label{c:fixedlength}
    There exists 
    a probabilistic algorithm that on input $x,\ell,c,\delta>0$ always terminates and outputs a
    program of length $\ell + O(c + \log (n/\delta))$, such that 
    for all $\ell,c$ and $n$, for all but at most $2^{\ell-c}$ strings $x$ of length $n$ with $C(x)
    < \ell$, 
    with probability $1-\delta$ the algorithm outputs a program that computes $x$. Moreover, the
    probabilistic algoritm uses $O(\log (c + n/\delta))$ random bits. 
  \end{claim}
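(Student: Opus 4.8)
The plan is to push $x$ through a constructible rich-owner graph and hand out a fixed-length name of a \emph{random} neighbor of $x$; decoding will then reconstruct $x$ by brute-force search among the strings of complexity below $\ell$. Concretely, on input $(x,\ell,c,\delta)$ with $n=|x|$, let $G=G_{n,\ell}=(L=\zo^n,R=\zo^m,E)$ be the graph furnished by Theorem~\ref{cor:extactorsWithRichElements}(1): it has left degree $2^d$, the rich owner property for $(\ell,c,\delta)$, and parameters $m=\ell+O(c+\log(n/\delta))$ and $d=O(c+\log(n/\delta))$. The probabilistic algorithm draws $w\in\zo^d$ uniformly at random, computes $y:=E(x,w)\in\zo^m$, and outputs the string $p$ obtained by prefixing $y$ with a self-delimiting encoding of $(n,\ell,c,\delta)$ (and padding to a fixed length). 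Computing a neighbor of $G$ always halts (in space $2^{O(n)}$), so the algorithm always terminates; it uses $d=O(c+\log(n/\delta))$ random bits; and, since we may assume $\ell\le n$ (otherwise the set $B$ below is all of $\zo^n$ and the claim is trivially witnessed by a ``print $x$'' program), the prefix costs $O(c+\log(n/\delta))$ bits, so $|p|=\ell+O(c+\log(n/\delta))$ on every run.

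The decoder is the universal machine $U$: on input $p$ it parses $(n,\ell,c,\delta)$ and $y$, rebuilds $G_{n,\ell}$ (possible, as the family is uniformly computable in its parameters), and then dovetails the computations of $U$ on all programs of length $<\ell$; whenever one halts with an output $x'\in\zo^n$, it checks whether $y\in N(x')$, and if so outputs $x'$ and halts. Every string produced this way has complexity $<\ell$, so the search ranges exactly over $B:=\{x'\in\zo^n:C(x')<\ell\}$, and $|B|\le 2^\ell-1<2^\ell$, so $B$ is an admissible set for the rich owner property at parameter $\ell$. Fix $\ell,c,n$ and apply that property to $B$: all but at most $2^{\ell-c}$ elements of $B$ are $\delta$-rich in $B$. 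Let $x$ be any such element. Then at most a $\delta$-fraction of its $2^d$ neighbors are shared in $B$, so with probability at least $1-\delta$ over $w$ the node $y=E(x,w)$ is not shared in $B$ --- equivalently, $x$ is the \emph{only} element of $B$ adjacent to $y$. In that event the decoder, which outputs the first element of $B$ it enumerates adjacent to $y$, outputs exactly $x$, and it halts because $x$ itself is eventually enumerated. Hence $U(p)=x$ with probability $\ge 1-\delta$, and the strings of length $n$ with $C(x)<\ell$ for which no guarantee is made are precisely the $\le 2^{\ell-c}$ non-$\delta$-rich elements of $B$. This is the Claim.

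The genuinely hard part --- the existence of a \emph{constructible} graph simultaneously enjoying the rich owner property and the logarithmic-in-$(n/\delta)$ bounds on $m-\ell$ and $d$ --- has been deferred to Theorem~\ref{cor:extactorsWithRichElements}, so the argument above is a short reduction. The point that requires care is that membership in $B$ is only semi-decidable: the decoder cannot wait to see all of $B$, it must commit to the first neighbor of $y$ it finds inside $B$. Correctness therefore needs more than ``$x$ owns most of its neighbors''; it needs that \emph{once} $y$ happens to be owned by $x$, \emph{no} element of $B$ other than $x$ is adjacent to $y$ --- which is exactly the event of probability $\ge 1-\delta$ guaranteed by $\delta$-richness. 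The remaining bookkeeping (self-delimiting encoding of the four parameters within $O(c+\log(n/\delta))$ bits, valid because $\ell\le n$ may be assumed; tallying the random bits; padding all outputs to a common length) is routine.
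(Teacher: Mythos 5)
Your proposal is correct and follows essentially the same route as the paper: push $x$ through the constructible rich-owner graph of Theorem~\ref{cor:extactorsWithRichElements}(1) with $B=\{x':C(x')<\ell\}$, output a parameter-prefixed name of a uniformly random neighbor, and decode by enumerating $B$ and returning the first enumerated element adjacent to that right node, with $\delta$-richness guaranteeing that with probability $1-\delta$ the chosen node is unshared and hence decodes uniquely to $x$. Your explicit remark that unshared ownership (not merely ``owned first'') is what makes the greedy semi-decidable search correct is exactly the point the paper relies on implicitly, so there is nothing further to add.
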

 
  If the algorithm was only required to terminate if $C(x) < \ell$, the claim would be easy:
  run all programs of length less than $\ell$ in parallel,
  wait until a program for~$x$ appears, and output
  this program; but this procedure never terminates if $C(x) \ge \ell$.

  To show the claim, fix some $\ell$, $c$, $\delta$, and $n$. 
  Let $B_{n,\ell}$ be the set of all  $n$-bit strings~$x$ with $C(x) < \ell$.
  Note that $B_{n,\ell}$ can be enumerated uniformly in $n$ and $\ell$ and that $|B_{n, \ell}| < 2^\ell $.
  Let $G_{n,\ell}$  
  be the graph satisfying the conditions of Theorem~\ref{cor:extactorsWithRichElements}(1). 
  Thus all but at most $2^{\ell-c}$ nodes are $\delta$-rich in $B_{n,\ell}$. 

  Consider a machine that given an encoding of $\ell,c,  \delta, n$ and a 
  right node $z$ of $G_{n,\ell}$ does the following: 
  it enumerates all $x \in B_{n,\ell}$  and 
  when the first such neighbor $x$ of $z$ in $G_{n,\ell}$  appears, it outputs $x$ and halts.
  All but at most $2^{\ell-c}$ such nodes $x$ are $\delta$-rich in $B_{n,\ell}$, 
  and for such~$x$ a fraction $(1-\delta)$ of neighbors are associated to programs for~$x$ as described above.
  The associated programs can be assumed to have length $\ell + O\left(c+\log (n/\delta)\right)$.
  On input $x$ of length $n$, and $\ell,c, \delta$, the algorithm of the claim, using $O(\log n/\delta)$ random bits, 
  randomly chooses a right neighbor $z$ of $x$ and outputs its associated program. 
  (Note that there is always at least one neighbor.)
  It remains to convert a program on this 
  special machine to a program for our standard reference machine~$U$. 
  This is possible using the function $t$  in the definition of standard machines; 
  it increases the length by $O(1)$ 
  and its computation time by a polynomial function. The claim is proven.~\qed
\medskip

We now proceed to the proof of  Theorem~\ref{t:main}. 
  Let $e$ be a constant such that $C(x) < |x| + e$ for all $x$. 
  For some $x$ and $c$, we could apply the algorithm of the claim for $\ell = e+1, e+2, \dots, |x|+e$ with the same choice of random bits at each iteration (so that the number of random bits remains $O(\log n/\delta)$). 
  In this way we obtain a list of $|x|$ programs such that, with probability $1-\delta$, one of them
  computes~$x$. This has almost the desired effect, the only problem being  that the construction
  may fail on  a $2^{-c}$ fraction of strings $x$ of some length (namely, on the strings that are not rich owners).

  To handle this, we modify the definition of $B_{n,\ell}$ above. The idea is that now $B_{n,\ell}$ should not only 
  contain strings of small complexity, but also the few strings that are not $\delta$-rich in $G_{n,\ell+1}$. 
  More precisely, fix some $n$ and apply Theorem~\ref{cor:extactorsWithRichElements}(1)  with $c = 2$. 
  For $\ell = e+2,\dots, n+e+1$ let $B_{n,\ell}$ be the union of the set of all $n$-bit $x$ 
  with $C(x) <\ell-1$ (first type) and those strings that are 
  not $\delta$-rich in $G_{n,\ell+1}$ (second type). By downward induction we show that $B_{n,\ell}$ can be 
  enumerated from~$n$ and~$\ell$, 
  and that the size of $B_{n,\ell}$ is bounded by~$2^\ell $. 
  Indeed, $B_{n,n+e+1}$ only contains strings of the first type and thus it satisfies the conditions. 
  Now assume the conditions hold for some $\ell \le n+e+1$. Both types of strings can be enumerated.  
  Moreover, the number of strings in $B_{n, \ell-1}$ of the first type is bounded by $2^{\ell-2}$. 
  By the induction hypothesis, $B_{n,\ell}$ has size at most $2^{\ell}$, 
  thus the number of strings that are not rich in $G_{n,\ell}$ is at most $2^{\ell-c} = 2^{\ell-2}$. 
  Hence, the size of $B_{n,\ell - 1}$ is at most $2^{\ell-2} + 2^{\ell-2} = 2^{\ell-1}$.
  This modification only changes the programs associated to right nodes by some fixed instructions,  
  and this does not affect their length by more than a $O(1)$ constant (and the time to generate
  them by more than a polynomial factor). 
\end{proof}

\begin{proof}[of Theorem~\ref{t:mainpoly}]  It is the same proof as above, except that we  use
  Theorem~\ref{cor:extactorsWithRichElements} (2).
  For later reference, we state explicitly  the polynomial-time version of claim~\ref{c:fixedlength}.
  \begin{claim}\label{c:fixedlengthPolytime}
    There exists 
    a probabilistic algorithm that on input $x,\ell,c,\delta>0$ outputs in polynomial
    time a program of length $\ell + O(c + \log^2 (n/\delta))$, such that 
    for all $\ell,c$ and $n$, for all but at most $2^{\ell-c}$ strings $x$ of length $n$ with $C(x)
    < \ell$, 
    with probability $1-\delta$ the algorithm outputs a program that computes $x$. Moreover, the
    probabilistic algoritm uses $O(c + \log^2 (n/\delta))$ random bits.~\qed
  \end{claim}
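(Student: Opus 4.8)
The plan is to follow the proof of Claim~\ref{c:fixedlength} essentially line by line, replacing the constructible graph from Theorem~\ref{cor:extactorsWithRichElements}(1) by the explicit graph from Theorem~\ref{cor:extactorsWithRichElements}(2), and then to re-inspect only the three quantities that are affected: the running time of the list-construction algorithm, the length of the output program, and the number of random bits consumed. The logic of the argument does not change; only the numerology does.

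Concretely, I would fix $\ell$, $c$, $\delta$, $n$ and put $B_{n,\ell} = \{x \in \zo^n \mid C(x) < \ell\}$, so that $|B_{n,\ell}| < 2^\ell$ and $B_{n,\ell}$ is enumerable (though not in bounded time) from $n$ and $\ell$. Let $G_{n,\ell} = (\zo^n, \zo^m, E)$ be the explicit graph given by Theorem~\ref{cor:extactorsWithRichElements}(2) for the parameters $(\ell,c,\delta)$, so that $m = \ell + O(c + \log^2(n/\delta))$, the left degree is $D = 2^d$ with $d = O(c + \log^2(n/\delta))$, and each neighbor $E(x,y)$ is computable in time $\poly(n,c,1/\delta)$; by the rich-owner property all but at most $2^{\ell-c}$ elements of $B_{n,\ell}$ are $\delta$-rich in $B_{n,\ell}$. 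As in the earlier proof, associate to each right node $z$ the program $p_z$ that, on a fixed machine, reads a self-delimiting encoding of $(n,\ell,c,\delta)$, enumerates $B_{n,\ell}$, and outputs the first $x \in B_{n,\ell}$ it encounters with $z \in E(x)$; if $x$ is $\delta$-rich in $B_{n,\ell}$ then for a $(1-\delta)$ fraction of its neighbors $z$ this first string is $x$ itself, so $p_z$ computes $x$. The list-construction algorithm on input $(x,\ell,c,\delta)$ then draws $y \in \zo^d$ uniformly at random, computes $z = E(x,y)$, and outputs $p_z$ translated to the reference machine $U$ by the polynomial-time map $t$ from the definition of a standard machine (an $O(1)$ increase in program length).

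It then remains to check the three parameters. \emph{Running time}: drawing $y$ costs $d = O(c + \log^2(n/\delta))$ bits, computing $z = E(x,y)$ costs $\poly(n,c,1/\delta)$ because $G_{n,\ell}$ is explicit, and assembling $p_z$ and applying $t$ is polynomial as well, so the algorithm always halts in time $\poly(n,c,1/\delta)$. \emph{Program length}: $p_z$ is the $m$-bit name of $z$ together with the self-delimiting code of $(n,\ell,c,\delta)$ and a constant-size interpreter, \ie\ $m + O(\log(n/\delta) + \log c) + O(1) = \ell + O(c + \log^2(n/\delta))$ bits, and translating to $U$ adds only $O(1)$. \emph{Random bits}: exactly $d = O(c + \log^2(n/\delta))$. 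Finally the correctness statement carries over verbatim: among the strings $x \in \zo^n$ with $C(x) < \ell$, all but at most $2^{\ell-c}$ are $\delta$-rich in $B_{n,\ell}$, and for each such $x$ the uniformly random $z = E(x,y)$ is an owned neighbor with probability at least $1-\delta$, in which case $p_z$ outputs $x$.

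I do not expect a genuine obstacle here; the only point requiring care is bookkeeping. One must keep in mind that ``polynomial time'' refers to the construction of the list, not to executing the programs in it — each $p_z$ still performs an unbounded enumeration of $B_{n,\ell}$ — and that the quadratic-logarithmic slack ($\log^2$ in place of $\log$) is forced by the seed length $d = O(\log^2 n)$ of the Guruswami--Umans--Vadhan explicit extractor at linear min-entropy, which propagates through Theorem~\ref{cor:extactorsWithRichElements}(2). Apart from tracking these overheads, the proof is identical to that of Claim~\ref{c:fixedlength}.
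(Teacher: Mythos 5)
Your proposal is correct and is exactly the paper's argument: the paper proves this claim by the single remark that it is the same proof as Claim~\ref{c:fixedlength} with Theorem~\ref{cor:extactorsWithRichElements}(2) substituted for part (1), which is precisely what you do, with the parameter bookkeeping carried out explicitly. Your observation that ``polynomial time'' applies to producing the program (not to running it, since $p_z$ still performs an unbounded enumeration of $B_{n,\ell}$) is the right point of care and matches the paper's intent.
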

\end{proof}

\medskip

\section{Construction of graphs with the rich owner property}
\label{s:construction}

All that is left is to prove Theorem~\ref{cor:extactorsWithRichElements}, \ie, to show the
construction of graphs with the rich owner property. 
We use the concept of a
$(s,\delta)$-{\em rich} node in $B$, introduced earlier  (also recall our convention that in case
the set $B$ of nodes  is omitted, it is assumed to be $L$, the set of left nodes).

Our proof has four steps which we describe roughly. 
First we show that most of the left nodes in an extractor graph share their right neighbors with a small number of other
left nodes; i.e., most left nodes are $(s,\delta)$-rich for appropriate~$s$ 
and~$\delta$ (Lemma~\ref{lem:extractorsAre_c_rich}). 
In the second step of the proof we split right nodes (and edges) in a graph such that right nodes share
less left neighbors.  
We do this in a way such that any $(s,\delta)$-rich node in some left subset becomes a $2\delta$-rich
node (Lemma~\ref{lem:c_righ_to_rich}). 
In the third step, we combine the previous results to show that within a small computational cost, 
extractors can be converted to graphs with the rich owner property 
(Proposition~\ref{prop:extractorsWithRichElements}). 
Finally, the theorem is proven using extractors from Theorems~\ref{t:extopt}
and~\ref{t:extpoly}.


\begin{lemma}\label{lem:extractorsAre_c_rich} 
  Let $0 < \varepsilon < 1$. All but at most $2^k$ left nodes of a $(k,\varepsilon)$ extractor with average right degree at
  most $a$ are $(a/\varepsilon,2\varepsilon)$-rich. 
\end{lemma}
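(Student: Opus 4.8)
The plan is to isolate the ``heavy'' right nodes --- those shared by at least $s:=a/\varepsilon$ left nodes --- to show by a simple edge count that they occupy only an $\varepsilon$-fraction of $R$, and then to invoke the extractor property exactly once in order to conclude that almost no left node sends many of its edges into this heavy set.

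First I would set $\mathrm{BAD}=\{y\in R:\deg(y)\ge s\}$ with $s=a/\varepsilon$. Since the graph is left-regular of degree $D$, the total number of edges equals $ND=\sum_{y\in R}\deg(y)$, and the hypothesis on the average right degree gives $\sum_{y\in R}\deg(y)\le aM$. Bounding this sum below by $s\cdot|\mathrm{BAD}|$ yields $|\mathrm{BAD}|\le aM/s=\varepsilon M$, that is, $|\mathrm{BAD}|/M\le\varepsilon$.

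Next, call a left node $x$ \emph{bad} if strictly more than a $2\varepsilon$-fraction of its $D$ neighbors lie in $\mathrm{BAD}$; a left node is $(a/\varepsilon,2\varepsilon)$-rich precisely when it is not bad, since the ``at least one neighbor'' clause is automatic for a left-regular graph of degree $D\ge 1$. Suppose, for contradiction, that more than $2^k$ left nodes are bad. Choosing $\lceil 2^k\rceil$ of them and letting $X$ be uniform on this set, we obtain a flat distribution with $H_\infty(X)\ge k$. Applying the extractor definition with the test set $A=\mathrm{BAD}$ gives $\Pr[E(X,U_d)\in\mathrm{BAD}]<|\mathrm{BAD}|/M+\varepsilon\le 2\varepsilon$. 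On the other hand, $\Pr[E(X,U_d)\in\mathrm{BAD}]$ is exactly the average, over the chosen bad left nodes $x$, of the fraction of $x$'s neighbors lying in $\mathrm{BAD}$, and every such fraction exceeds $2\varepsilon$; hence this probability is $>2\varepsilon$, a contradiction. Therefore at most $2^k$ left nodes are bad, which is the claim.

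There is no hard technical core: the whole argument is a single application of the extractor property to a carefully chosen flat source and a carefully chosen test set. The only points that need a little attention are the edge-counting step that converts ``average right degree $\le a$'' into ``$|\mathrm{BAD}|\le\varepsilon M$'', and the verification that a sufficiently large set of bad nodes supports a flat distribution of min-entropy at least $k$ so that the extractor guarantee is applicable. Integrality is not a real issue --- ``$\deg(y)\ge s$'' coincides with ``$\deg(y)\ge\lceil s\rceil$'', and since the number of bad nodes is an integer one may always pass to a subset of exactly $\lceil 2^k\rceil$ of them.
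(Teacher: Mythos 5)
Your proposal is correct and follows essentially the same route as the paper's proof: isolate the heavy right nodes of degree at least $a/\varepsilon$, bound their measure by $\varepsilon$ via an edge count, and apply the extractor guarantee once to a flat distribution on the set of left nodes sending more than a $2\varepsilon$-fraction of their edges into that set. Your write-up is in fact slightly more explicit than the paper's on two minor points (passing to a subset of exactly $\lceil 2^k\rceil$ bad nodes to get a legitimate flat source, and noting that the ``at least one neighbor'' clause is automatic), but the argument is the same.
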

\if01
\begin{lemma}\label{lem:extractorsAre_c_rich}
  All but at most $2^k$ left nodes of a $(k,1/c)$ extractor with average right degree at most $a$ are $(ac,2/c)$-rich. 
\end{lemma}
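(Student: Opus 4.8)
The plan is to prove the lemma by a single application of the extractor property. Set the threshold $s := a/\varepsilon$ and let $A \subseteq R$ be the set of $s$-shared right nodes, i.e.\ those of degree at least $s$ in $G_E$. First I would bound $|A|$ by averaging: since $G_E$ is left-regular with left degree $D = 2^d$, it has exactly $ND$ edges, so $\sum_{z\in R}\deg(z) = ND \le aM$ by the hypothesis on the average right degree; hence $s\,|A| \le \sum_{z\in A}\deg(z) \le aM$, which gives $|A| \le aM/s = \varepsilon M$, i.e.\ $|A|/M \le \varepsilon$. So the heavy right nodes form a set of relative measure at most $\varepsilon$.

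Next, let $\mathrm{Bad}$ be the set of left nodes that are \emph{not} $(s,2\varepsilon)$-rich. Since every left node has a neighbor, $x \in \mathrm{Bad}$ exactly when strictly more than a $2\varepsilon$ fraction of the neighbors of $x$ are $s$-shared, i.e.\ lie in $A$. I claim $|\mathrm{Bad}| \le 2^k$. Suppose not; then the uniform distribution $X$ on $\mathrm{Bad}$ has $H_\infty(X) = \log|\mathrm{Bad}| > k$, so the extractor guarantee applies to $X$ and to the set $A$, giving
\[
\big|\,\prob[E(X,U_d) \in A] - |A|/M\,\big| < \varepsilon ,
\]
and therefore $\prob[E(X,U_d)\in A] < |A|/M + \varepsilon \le 2\varepsilon$. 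On the other hand, $E(X,U_d)$ picks $x$ uniformly in $\mathrm{Bad}$ and then a uniformly random neighbor of $x$, and for \emph{every} $x\in\mathrm{Bad}$ the probability that this neighbor lands in $A$ exceeds $2\varepsilon$; averaging over $x$ yields $\prob[E(X,U_d)\in A] > 2\varepsilon$, a contradiction. Hence $|\mathrm{Bad}| \le 2^k$, which is precisely the statement of the lemma.

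There is no serious obstacle here; the only real design choice is the threshold $s = a/\varepsilon$, which is forced by wanting $A$ to have relative size at most $\varepsilon$ so that it is absorbed by the extractor's error — the two sources of slack ($\varepsilon$ from $|A|/M$ and $\varepsilon$ from the extractor) then combine into the $2\varepsilon$ appearing in ``$(s,2\varepsilon)$-rich''. A minor bookkeeping point is that we treat the neighbors of a left node $x$ as the multiset $(E(x,y))_{y\in\zo^d}$ and count degrees and ``fractions of neighbors'' with multiplicity; this is the convention under which $G_E$ is genuinely left-$2^d$-regular and under which $\prob[E(X,U_d)\in A]$ equals the average over $x$ of the ``bad fraction''. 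If one does not want to assume $2^k$ is an integer, one replaces ``uniform on $\mathrm{Bad}$'' by a flat distribution on a size-$\lceil 2^k\rceil$ subset of $\mathrm{Bad}$, using that flat distributions suffice in the extractor definition.
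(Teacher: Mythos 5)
Your proof is correct and follows essentially the same route as the paper: bound the measure of the set $A$ of heavy right nodes by $\varepsilon$ via edge-counting, then apply the extractor guarantee to a flat distribution on the set of non-rich left nodes to derive a contradiction if that set exceeds $2^k$. The bookkeeping remarks (multiset neighbors, flat distributions suffice) are fine and match the paper's conventions.
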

\fi

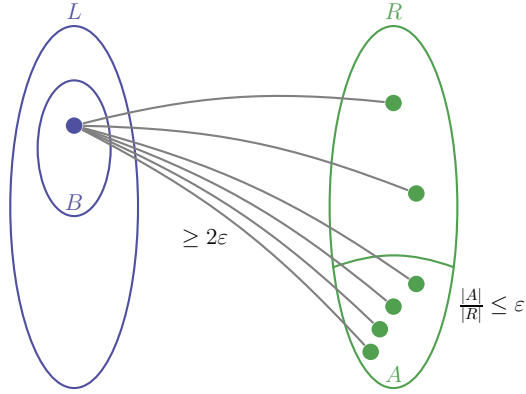
\begin{figure}[h]
  \centering
  \begin{tikzpicture}[scale=0.6, transform shape, thick,
      lNode/.style={fill=myblue, fill, circle},
      rNode/.style={fill=mygreen, fill, circle},
      every fit/.style={ellipse,draw,inner sep=-2pt,text width=2cm},
    ]

    \node[lNode] (l1) at (0,-.5) {};

    \node[draw, ellipse, minimum height=8cm, minimum width=28mm,myblue] (leftset) at (0,-2.3) {};
    \node[anchor=south,myblue] at (leftset.90) {\Large $L$};

    \node[draw, ellipse, minimum height=3cm, minimum width=16mm,myblue] (setB) at (0,-1) {};
    \node[anchor=south,myblue] at (setB.-90) {\Large $B$};

    \begin{scope}[xshift=7cm]
      \node[rNode] at (0,0) (r1)  {};
      \node[rNode] at (.5,-2) (r2)  {};
      \node[rNode] at (-0.5,-5.5) (r6)  {};
      \node[rNode] at (-0.3,-5) (r5)  {};
      \node[rNode] at (-0,-4.5) (r4)  {};
      \node[rNode] at (0.5,-4) (r3)  {};
      \node[draw, ellipse, minimum height=8cm, minimum width=28mm,mygreen] (rightset) at (0,-2.3) {};
      \node[anchor=south,mygreen] at (rightset.90) {\Large $R$};
    \end{scope}

    \path[mygreen] (rightset.-45) edge [bend right=20] (rightset.225);
    \node[mygreen, anchor=south] at (rightset.-90) {\Large $A$};
    \node[anchor=north west] at (rightset.-52) {\Large $\frac{|A|}{|R|} \le \varepsilon$};

    \foreach \i in {1,...,5}
       \path[gray] (l1) edge [bend left=10] (r\i);

    \path[gray] (l1) edge [bend left=10] node[midway,black,anchor=north east] {\Large{$\ge 2\varepsilon$}} (r6);

  \end{tikzpicture}
  \caption{Sets $B$ and $A$ in the proof of Lemma~\ref{lem:extractorsAre_c_rich}.} 
\end{figure}

\begin{proof}
  It suffices to show the lemma for $a$ equal to the average right degree, because for larger $a$ more
  left nodes are $(a/\varepsilon,2\varepsilon)$ rich.
 Let $A$ be the set of right nodes with degree at least $a/\varepsilon$.
 Note that $|A|/|R|$ is at most $\varepsilon$. Let $B$ be the set of left nodes that are not
 $(a/\varepsilon,2\varepsilon)$ rich, 
 \ie, have more than a fraction $2\varepsilon$ of neighbors in~$A$. 
 Consider a flat distribution over all edges leaving from
 $B$. Inequality \eqref{eq:extractorDef} in the definition of a $(k,\varepsilon)$ extractor is 
 violated:
 \[
\bigg| \prob[E(X, U_d) \in A] - \frac{|A|}{M} \bigg|
  > \left| 2\varepsilon - \varepsilon \right| = \varepsilon\,.
 \]
 This implies that $B$ has less than $2^k$ elements, 
 and this implies the lemma.
\end{proof}

In the second step, we split edges and right nodes of a graph. This splitting satisfies the
following property for all subsets $B$ of left nodes:  if a right node has few neighbors in $B$, 
then most of the corresponding splitted nodes have no or a unique neighbor in~$B$.
We do this using a technique from~\cite{bfl:j:boundedkolmogorov} (and also employed
in~\cite{bmvz:c:shortlist}). We  hash the right nodes using  congruences modulo a small set of
prime numbers.  For this technique we need the following lemma. 

\begin{lemma}\label{l:mod}
Let $x_1, x_2 \ldots, x_s$ be distinct $n$-bit strings, which we view in some canonical way as integers $< 2^{n+1}$.
Let $p_i$ be the $i$-th prime number and let $L = \{p_1, \ldots, p_t\}$, where $t = (1/\delta) \cdot s \cdot n$.

For every $i \leq s$, for less than a fraction $\delta$ of $p$ in $L$, 
the value of $x_i\bmod p$ appears more than once in the sequence $(x_1\bmod p,x_2\bmod p,\ldots,x_s\bmod p)$.
\end{lemma}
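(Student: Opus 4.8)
The plan is to fix an index $i \le s$ and bound the number of ``bad'' primes $p \in L$, where $p$ is bad for $i$ if there exists some $j \ne i$ with $x_i \equiv x_j \pmod p$. First I would observe that for a single pair $i \ne j$, the value $x_i - x_j$ is a nonzero integer of absolute value less than $2^{n+1}$, so it has fewer than $n+1$ distinct prime factors (since each prime factor is at least $2$, and a product of $n+1$ primes exceeds $2^{n+1}$). Hence the set of primes $p$ dividing $x_i - x_j$ — equivalently, the primes $p$ with $x_i \equiv x_j \pmod p$ — has size at most $n$.

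Next I would take the union over all $j \ne i$: the set of primes $p$ for which $x_i \bmod p$ collides with $x_j \bmod p$ for at least one $j$ is contained in $\bigcup_{j \ne i} \{p : p \mid x_i - x_j\}$, which has size at most $(s-1) \cdot n < s n$. Since $|L| = t = (1/\delta)\cdot s \cdot n$, the fraction of bad primes in $L$ is less than $sn / ((1/\delta) s n) = \delta$. For every $p \in L$ that is not bad, $x_i \bmod p$ is distinct from all $x_j \bmod p$ with $j \ne i$, so it appears exactly once in the sequence $(x_1 \bmod p, \ldots, x_s \bmod p)$. This is precisely the claim.

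One technical point worth stating carefully: the prime $p$ must actually not divide $x_i - x_j$ in order to conclude $x_i \not\equiv x_j \pmod p$, and conversely $x_i \equiv x_j \pmod p$ iff $p \mid x_i - x_j$; since the $x_j$ are distinct integers, $x_i - x_j \ne 0$ for $j \ne i$, so this divisibility count is legitimate and there is no issue with the zero integer. I would also note that we do not even need the $p_i$ to be the first $t$ primes — any $t$ distinct primes would do — but the statement as given is fine.

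I do not expect a serious obstacle here; the only thing to be careful about is the counting bookkeeping (bounding the number of prime divisors of an integer below $2^{n+1}$ by $n$, then multiplying by the number $s-1$ of competitors, then comparing against $t$). The argument is essentially the standard ``polynomial/integer has few roots modulo primes'' trick, and the choice $t = (1/\delta) s n$ is exactly calibrated so that the union bound gives fraction $< \delta$.
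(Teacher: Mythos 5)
Your proof is correct and is essentially the same as the paper's: the paper also bounds, for each fixed $j \ne i$, the number of primes with $x_i \equiv x_j \pmod p$ by $n$ (using that the values are below $2^{n+1}$, which it attributes to the Chinese Remainder Theorem where you use the more elementary product-of-primes bound), then applies the same union bound over the $s-1$ competitors and divides by $t = (1/\delta)\cdot s\cdot n$. No gaps.
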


\begin{proof}
By the Chinese Remainder Theorem and taking into account that the $x_i$ values are bounded by $2^{n+1}$, we notice that, for every $x_j \not= x_i$, ``$x_i = x_j \hspace{-0.2cm}\mod p$"   holds for at most $n$ prime numbers $p$. Therefore,  ``$\exists x_j \not= x_i  ( x_i = x_j\hspace{-0.2cm} \mod p)$"  holds for at most $(s-1)n$ prime numbers $p$. Since $L$ contains $(1/\delta) \cdot s \cdot n$ prime numbers, it follows that
\[
  \prob_{p \in L}[\mbox{$x_i \hspace{-0.2cm}\mod p$ is not unique}] \leq \frac{(s-1)n}{(1/\delta)
  \cdot s \cdot n} < \delta. 
\]
\end{proof}

\begin{lemma}\label{lem:c_righ_to_rich}
  For any $s$, $\delta>0$ and (left-regular) graph $G = (L = \zo^n,R,E)$, let $t = sn/\delta$.
  There exists a (left-regular) graph $H = (L, R \times S, E')$  such that 
  \begin{enumerate}

 \item The left degree of $H$  is exactly $t$ times the left degree of $G$, 
  
 \item The set $S$  has size at most $O\left(t^3\right)$,
  
 \item   If in $G$ the $i$-th right node of a left node $n$  can be constructed in $time (n)$,  the same
 operation in $H$ takes $time(n) + \poly(t)$, 
  and

\item  If a left node $x$ is $(s,\delta)$-rich in some $B$ for $G$, 
       then it is $2\delta$-rich in $B$ for $H$. 
  \end{enumerate}
\end{lemma}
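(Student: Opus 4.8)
The plan is to split each right node of $G$ into many copies indexed by the residues of the left endpoints modulo small primes, following the hashing idea of Lemma~\ref{l:mod}. Concretely, let $p_1<p_2<\cdots<p_t$ be the first $t$ primes (recall $t=sn/\delta$) and set $S=\{(i,v):1\le i\le t,\ 0\le v<p_i\}$. For a left node $x$, viewed as an integer below $2^{n+1}$, and an edge label $y$ of $G$ with $r=N_G(x,y)$, I would put into $H$ the $t$ edges from $x$ to the nodes $(r,(i,x\bmod p_i))$ for $i=1,\dots,t$, labelling the new edge by the pair $(y,i)$. This construction delivers properties 1--3 at once: the left degree of $H$ is exactly $t$ times that of $G$; $|S|=\sum_{i=1}^{t}p_i\le t\cdot p_t=O(t^2\log t)=O(t^3)$ because the $t$-th prime is $O(t\log t)$; and the $(y,i)$-th neighbour of $x$ in $H$ is computed by first obtaining $r=N_G(x,y)$ in $time(n)$ and then computing $p_i$ and $x\bmod p_i$, which costs $\poly(t)$ (sieving up to $O(t\log t)$ and one modular reduction of an $n$-bit number).

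The substance is property 4. Fix $B\subseteq L$ and a left node $x$ that is $(s,\delta)$-rich in $B$ for $G$; then $x\in B$ and $x$ still has at least one neighbour in $H$, so the only thing to prove is that at most a fraction $2\delta$ of $x$'s neighbours in $H$ are shared in $B$. Observe that a neighbour $(r,(i,x\bmod p_i))$ of $x$ in $H$ is shared in $B$ exactly when there is some $x'\in B$ with $x'\ne x$, $r\in N_G(x')$, and $x'\equiv x\pmod{p_i}$. I would then split the $G$-edges $y$ of $x$ into two groups according to whether $r=N_G(x,y)$ is $s$-shared in $B$ for $G$. By $(s,\delta)$-richness, edges of the first kind form at most a $\delta$-fraction of the $D$ edges of $x$ (where $D$ is the left degree of $G$), so even charging all $t$ copies of each such $r$ as shared contributes at most a $\delta$-fraction of the $tD$ neighbours of $x$ in $H$. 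For an edge $y$ of the second kind, the set $T_y=\{x'\in B:r\in N_G(x')\}$ of left neighbours of $r$ lying in $B$ has fewer than $s$ elements and contains $x$; applying Lemma~\ref{l:mod} to the distinct strings of $T_y$ shows that for all but a $\delta$-fraction of indices $i$ the residue $x\bmod p_i$ is unique within $T_y$, and for those $i$ the node $(r,(i,x\bmod p_i))$ is owned by $x$, hence not shared. So edges of the second kind also contribute at most a $\delta$-fraction, giving at most $2\delta tD$ shared neighbours out of $tD$, i.e. $x$ is $2\delta$-rich in $B$ for $H$.

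The step I would be most careful about — the only real subtlety — is the invocation of Lemma~\ref{l:mod} for the set $T_y$, which has size $<s$ rather than exactly $s$: the lemma is stated for $s$ strings, but its proof only uses the bound $(\#\text{strings}-1)\cdot n$ on the number of colliding primes, so with the same list of $t=sn/\delta$ primes the bad fraction for any set of at most $s$ distinct strings is still below $\delta$; this needs to be pointed out explicitly (alternatively one could pad $T_y$ to size $s$, but that is cleaner to avoid). A secondary remark is that $H$ may be a multigraph, since a left node can reach the same node of $R\times S$ through different labels $(y,i)$; this is harmless because left-regularity and the rich/shared definitions only speak of neighbour counts with multiplicity. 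Everything else is routine bookkeeping: verifying $|T_y|<s$ from the definition of $s$-shared, noting $x\in T_y$, and noting that uniqueness of $x\bmod p_i$ in $T_y$ is precisely "no competing $x'$", so the split node is owned by $x$.
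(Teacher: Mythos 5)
Your proof is correct and follows essentially the same route as the paper: the same prime-residue splitting of right nodes (your $S$ is just a slightly tighter indexing of the paper's $\{p_1,\dots,p_t\}\times\{0,\dots,p_t-1\}$), and the same two-part accounting for property 4, charging the $\le\delta$ fraction of $s$-shared right nodes in full and applying Lemma~\ref{l:mod} to the remaining ones to get $\delta+\delta(1-\delta)\le 2\delta$. Your explicit remarks on applying Lemma~\ref{l:mod} to fewer than $s$ strings and on multiplicities are points the paper glosses over, and both are handled correctly.
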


\begin{figure}[h]
  \hspace{-5mm}
  \begin{tikzpicture}[scale=0.5, transform shape,
      lNode/.style={fill=myblue, circle},
      rNode/.style={fill=mygreen, circle}
      ]

      \clip (-2,3) rectangle (9,-8);

    \node[draw, ellipse, minimum height=8cm, minimum width=28mm,myblue] (leftset) at (0,-2.5) {};
    \node[anchor=south,myblue] at (leftset.90) {\Large $L$};

    \node[lNode] (l1) at (0,-.5) {};
    \node[left of=l1, xshift=5mm]  {\Large $x_1$};
    \node[lNode] (l2) at (0,-2.5) {};
    \node[left of=l2, xshift=5mm]  {\Large $x_2$};
    \node[lNode] (l3) at (0,-4.5) {};
    \node[left of=l3, xshift=5mm]  {\Large $x_j$};

    \begin{scope}[xshift=7cm]
      \node[draw, ellipse, minimum height=16cm, minimum width=28mm,mygreen] (leftset) at (0,-6.5) {};
      \node[anchor=south,mygreen] at (leftset.90) {\Large $R$};

      \node[rNode] (r1) at (0,-0.5) {};
      \node[rNode] (r2) at (0,-5.5) {};
      \node[rNode] (r3) at (0,-10.5) {};
      \node[rNode] (r4) at (0,-15.5) {};
    \end{scope}

    \foreach \i in {1,...,3}
       \path[gray] (l\i) edge [bend left=10] (r1);

    \foreach \i in {1,...,4}
       \path[gray] (l3) edge [bend left=10] (r\i);

    \node[right of=r1, xshift=-5mm]  {\Large $z$};
    \end{tikzpicture}
    \quad \quad \quad
    \begin{tikzpicture}[scale=0.5, transform shape,
      lNode/.style={fill=myblue, circle},
      rNode/.style={fill=mygreen, circle}
      gNode/.style={fill=gray, circle}
      ]

      \clip (-2,3) rectangle (12,-8);

    \node[draw, ellipse, minimum height=8cm, minimum width=28mm,myblue] (leftset) at (0,-2.5) {};
    \node[anchor=south,myblue] at (leftset.90) {\Large $L$};

    \node[lNode] (l1) at (0,-.5) {};
    \node[left of=l1, xshift=5mm]  {\Large $x_1$};
    \node[lNode] (l2) at (0,-2.5) {};
    \node[left of=l2, xshift=5mm]  {\Large $x_2$};
    \node[lNode] (l3) at (0,-4.5) {};
    \node[left of=l3, xshift=5mm]  {\Large $x_j$};
     
    \begin{scope}[xshift=7cm]
      \node[draw, rectangle, rounded corners, minimum height=16cm, minimum width=5cm,mygreen] (rightset) at (0,-6.5) {};
      \node[anchor=south,mygreen] at (rightset.90) {\Large $R$};

      \foreach \i/\j in {0/1,-0.5/2}
	 \fill[gray] (-1.5,\i) node (r1\j) {} circle (0.1);

      \foreach \i/\j in {0/1,-0.5/2,-1/3}
	 \fill[gray] (-.5,\i) node (r2\j) {} circle (0.1);

      \fill[red] (r22) {} circle (0.1);

      \foreach \i/\j in {0/1,-0.5/2,-1/3,-1.5/4, -2/5}
	 \fill[gray] (.5,\i) node (r3\j) {} circle (0.1);

      \foreach \i/\j in {0/1,-0.5/2,-1/3,-1.5/4, -2/5, -2.5/6, -3/7}
	 \fill[gray] (1.5,\i) node (r4\j) {} circle (0.1);

      \foreach \i in {1,...,4}
	\node[above of=r\i1,anchor=north]  {\Large $p_{\i}$};

      \node[rNode] (r2) at (0,-6) {};
      \node[rNode] (r3) at (0,-12) {};
      \node[rNode] (r4) at (0,-18) {};

      \draw[decorate,decoration={brace, amplitude=6pt},gray] ($(r41) + (1.5,.4)$) --
      node[anchor=west,xshift=5mm] {\Large $z$} ($(r47) + (1.5,-.4)$);
    \end{scope}

    \coordinate[xshift=-4mm] (b1) at (rightset.110);
    \coordinate[xshift=-4mm] (b2) at (rightset.114);
    \coordinate[xshift=-4mm] (b3) at (rightset.118);

    \path[gray,ultra thick] (l1) edge [bend left=10] (b1);
    \path[gray,ultra thick] (l2) edge [bend left=10] (b2);
    \path[gray,ultra thick] (l3) edge [bend left=10] (b3);

    \path[gray] ($(b1)+(0,-1pt)$) edge [bend left=10] (r11);
    \path[gray] (b1) edge [bend left=10] (r21);
    \path[gray] (b1) edge [bend left=10] (r32);
    \path[gray] ($(b1)+(0,1pt)$) edge [bend left=10] (r41);

    \path[gray] ($(b2)+(-1pt,1pt)$) edge [bend left=5] (r11);
    \path[gray] (b2) edge [bend left=25] (r22);
    \path[gray] (b2) edge [bend left=10] (r33);
    \path[gray] ($(b2)+(0,-1pt)$) edge [bend left=10] (r45);

    \path[gray] ($(b3)+(0,1pt)$) edge [bend left=10] (r12);
    \path[red] (b3) edge [bend left=10] (r22);
    \path[gray] (b3) edge [bend left=10] (r35);
    \path[gray] ($(b3)+(0,-1pt)$) edge [bend left=10] (r46);

    \begin{scope}[xshift=7cm,yshift=-5.3cm]
      \foreach \i/\j in {0/1,-0.5/2}
	 \fill[gray] (-1.5,\i) node (r1\j) {} circle (0.1);

      \foreach \i/\j in {0/1,-0.5/2,-1/3}
	 \fill[gray] (-.5,\i) node (r2\j) {} circle (0.1);

      \foreach \i/\j in {0/1,-0.5/2,-1/3,-1.5/4, -2/5}
	 \fill[gray] (.5,\i) node (r3\j) {} circle (0.1);

      \foreach \i/\j in {0/1,-0.5/2,-1/3,-1.5/4, -2/5, -2.5/6, -3/7}
	 \fill[gray] (1.5,\i) node (r4\j) {} circle (0.1);
    \end{scope}

    \coordinate[xshift=-4mm] (b1) at (rightset.145);
    \coordinate[xshift=-4mm] (b2) at (rightset.-140);
    \coordinate[xshift=-4mm] (b3) at (rightset.-110);

    \foreach \i in {1,...,3}
       \path[gray,ultra thick] (l3) edge [bend left=10] (b\i);
    \path[red,ultra thick] (l3) edge [bend left=10] (b2);

    \path[gray] ($(b1)+(0,-1pt)$) edge [bend left=10] (r12);
    \path[gray] (b1) edge [bend left=10] (r21);
    \path[gray] (b1) edge [bend left=10] (r34);
    \path[gray] ($(b1)+(0,1pt)$) edge [bend left=10] (r41);
  \end{tikzpicture}
  
  \caption{Splitting of right nodes in the proof of Lemma~\ref{lem:c_righ_to_rich}.}
  \label{fig:splittingNodes}
\end{figure}
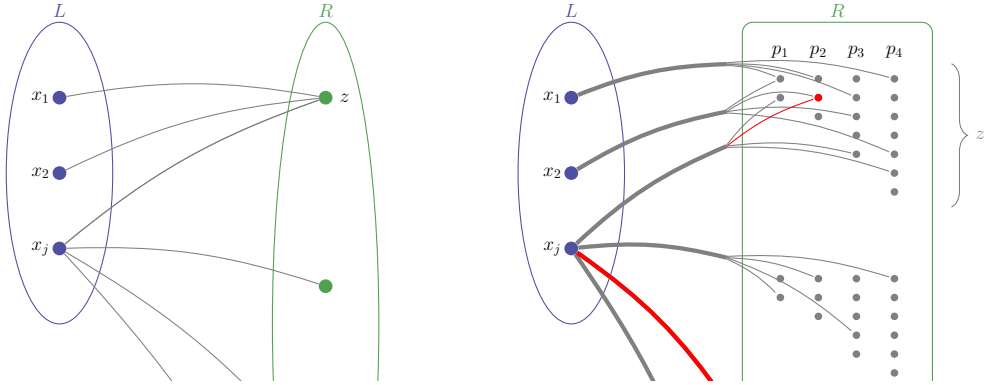

\begin{proof}
  Let $G = (L,R,E)$ and let $p_1, p_2, \ldots, p_t$ be the first $t$ prime numbers.
  The right set of $H$ is given by
\[  
   R \times\{p_1, \ldots, p_t\} \times \{0,1, \ldots, p_t-1\}\,.
\]
 The edges of $H$ are obtained by adding for each edge $(x,z)$ in $G$ the edges 
\[
\begin{array}{l}
(x, (z, p_1, x\bmod p_1)),  (x, (z, p_2, x\bmod p_2)),\ldots, 
(x, (z, p_t,  x \bmod p_t))
\end{array}
\]
 in $H$ (one can think that each edge  $(x, z)$ in  $G$ is split into $t$  edges in $H$, see
 Figure~\ref{fig:splittingNodes}).

 This operation increases the left degree by a factor of $t$, which implies (1). The size of $S = \{p_1, \ldots, p_t\} \times
 \{0,1, \ldots, p_t-1\}$ is bounded by ${p_t}^2$. 
 Because $p_t \le t \ln t + t \ln\ln t$ for $t \ge 6$, this size is bounded by $O(t^3)$, which implies (2).
 Enumerating the first $t$ prime numbers is possible in time $\poly(t)$, which implies (3).

 It remains to show why each $(s,\delta)$-rich left node in some $B$ in graph $G$, is $2\delta$-rich
 in $B$ in graph $H$. The proof for general $B$ follows the proof for $B=L$ which is presented here.
 Suppose a right node $z$ in $G$ is not $s$-shared, \ie,  it has  $s'$ neighbors 
 $x_1,\dots, x_{s'}$ with $s' \le s$. 

 For some $j \le s'$,  
 how many nodes $(z, p_i, x_j \bmod p_i)$ with $i = 1, \dots, t$ are shared? 
 If for some $p_i$ the  value $x_j \bmod p_i$ appears more than once in $x_1 \bmod p_i, \ldots, x_{s'} \bmod p_i$, 
 then $(z, p_i, x_j \bmod p_i)$ is shared. By Lemma~\ref{l:mod}, 
 at most a fraction $\delta$ of nodes 
 $(z, p_i, x_j \bmod p_i)$ in $H$ are shared (red element in figure~\ref{fig:splittingNodes}).

\if01
 For a fixed $i \le t$, how many nodes $(z, p_i, x_j \bmod p_i)$ are shared? 
 If for some $i$ all values $x_j \mod p_i$ with $j \le s'$ are different, 
 then $(z, p_i, x_j \bmod p_i)$ is not shared. By Lemma~\ref{l:mod} this happens for at least a fraction
 $1-\delta$ of $p_i$.  Thus, for fixed $j$ at most a fraction $\delta$ of nodes $(z, p_i, x_j \mod
 p_i)$ in $H$ are shared.
\fi

 Let $x_j$ be a left node in $G$ that is $(s,\delta)$-rich. 
 
 Then at most a fraction $\delta$ of 
 its right nodes in $G$ are $s$-shared (denoted by thick red edges in figure~\ref{fig:splittingNodes}). 
 By the previous paragraph, of the remaining 
 $1-\delta$ fraction of nodes, at most a fraction $\delta$ are shared in $H$. 
 Thus the total fraction of shared nodes is at most $\delta + \delta(1-\delta) \le 2\delta$; 
 \ie, $x_j$ is $2\delta$-rich for $H$.
\end{proof}

Now we combine previous results to show that we can convert extractors to graphs in which 
most left nodes in a sufficiently small set $B$ are $\varepsilon$-rich for some~$\varepsilon$.

\begin{proposition}\label{prop:extractorsWithRichElements}
 Let $E: \zo^n \times \zo^d \mapping \zo^m$ be a $(k,\varepsilon)$
  extractor, and let $G$ be the graph associated to $E$.
   For any natural number $a$,  there exist a set $S$ and a non-empty (left-regular) graph $H = (L,R\times S,E')$ such that 
  \begin{enumerate}
    \item \label{item:increases} 
      the left degree is at most $O(2^dan/\varepsilon^2)$
      and $|S| \le O\left( (an/\varepsilon^2)^3 \right)$,
    \item \label{item:extentionPolyTime}
     from $n, k, \varepsilon, (x,y)$, one can compute a list of all $s$ such that 
     $(x,(y,s)) \in E'$ in time $\poly(n,a,1/\varepsilon)$,
   \item \label{item:richNodes}
     if $B \subseteq \zo^n$ is such that the average number of edges from $B$ arriving in a right node from $G$ is at most~$a$, 
     then all but at most $2^k$ left elements in $B$ are $4\varepsilon$-rich in $B$ relative to~$H$. 
  \end{enumerate}
\end{proposition}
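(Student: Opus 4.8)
The plan is to obtain $H$ by a single application of Lemma~\ref{lem:c_righ_to_rich} to the extractor graph $G$, with the splitting parameters chosen so that the two notions of richness line up. Concretely, take $s := a/\varepsilon$ and $\delta := 2\varepsilon$, so that $t := sn/\delta = an/(2\varepsilon^2)$, and let $H = (L, R\times S, E')$ be the graph produced by Lemma~\ref{lem:c_righ_to_rich} for these values. Parts~(1)--(3) of that lemma then give parts~(1)--(2) of the Proposition directly: the left degree of $H$ equals $t$ times the left degree $2^d$ of $G$, namely $2^d\,an/(2\varepsilon^2) = O(2^d an/\varepsilon^2)$; the auxiliary set $S$ has size $O(t^3) = O((an/\varepsilon^2)^3)$; and, since $t = O(an/\varepsilon^2)$, the only data needed to list all $s \in S$ with $(x,(y,s)) \in E'$ are the first $t$ primes and the residues $x \bmod p_i$, which (as in the proof of Lemma~\ref{lem:c_righ_to_rich}) are computable in time $\poly(t) = \poly(n,a,1/\varepsilon)$. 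In particular this step does \emph{not} use any computability assumption on $E$. Finally $H$ is non-empty, since $G$ is left-regular with left degree $2^d \ge 1$ and $L = \zo^n \ne \emptyset$.

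The content is in part~(3), which I would prove by composing a ``relativized'' version of Lemma~\ref{lem:extractorsAre_c_rich} with part~(4) of Lemma~\ref{lem:c_righ_to_rich}. Fix a left subset $B$ whose average right degree in $G$, counting only edges emanating from $B$, is at most $a$. The key claim is that all but at most $2^k$ elements of $B$ are $(a/\varepsilon, 2\varepsilon)$-rich \emph{in $B$} with respect to $G$. Granting the claim, part~(4) of Lemma~\ref{lem:c_righ_to_rich} (which is stated for an arbitrary left subset) turns each such node into a $2\cdot 2\varepsilon = 4\varepsilon$-rich node in $B$ for $H$, and since $\delta = 2\varepsilon$ this is exactly part~(3).

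To prove the claim I would rerun the proof of Lemma~\ref{lem:extractorsAre_c_rich} with ``degree'' replaced throughout by ``number of neighbors in $B$''. Let $A_B$ be the set of right nodes having at least $a/\varepsilon$ neighbors in $B$. The number of edges leaving $B$ is at most $aM$ with $M = |R|$, so by Markov $|A_B| \le \varepsilon M$. Let $B'$ be the set of left nodes of $B$ that are not $(a/\varepsilon,2\varepsilon)$-rich in $B$, i.e.\ that have more than a $2\varepsilon$ fraction of their $2^d$ neighbors in $A_B$. If $|B'| \ge 2^k$, then the uniform (hence flat) distribution $X$ on $B'$ has min-entropy at least $k$, and plugging it into inequality~\eqref{eq:extractorDef} with test set $A_B$ gives $\prob[E(X,U_d) \in A_B] > 2\varepsilon$ while $|A_B|/M \le \varepsilon$, so the left-hand side exceeds $\varepsilon$ --- contradicting the $(k,\varepsilon)$-extractor property. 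Hence $|B'| < 2^k$, which is the claim.

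The step I expect to need the most care is exactly this relativization. Lemma~\ref{lem:extractorsAre_c_rich} as stated concerns all of $L$ and a global average right degree, whereas the hypothesis here is about the right degree induced by $B$; one has to be careful that the relevant test set is $A_B$ rather than the ``global'' set $A$, and that the violating flat distribution is supported on the bad subset $B'$ of $B$ (so that its min-entropy is $\log|B'| \ge k$). Everything else --- substituting $t = an/(2\varepsilon^2)$ into the bounds of Lemma~\ref{lem:c_righ_to_rich} and matching $4\varepsilon$ with $2\delta$ --- is routine bookkeeping.
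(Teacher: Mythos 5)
Your proposal is correct and follows essentially the same route as the paper: apply Lemma~\ref{lem:c_righ_to_rich} with $s=a/\varepsilon$ to the extractor graph, and combine a $B$-relativized form of Lemma~\ref{lem:extractorsAre_c_rich} with part~(4) of the splitting lemma. The only (harmless) differences are that you take $\delta=2\varepsilon$ in the splitting so that $4\varepsilon=2\delta$ falls out exactly (the paper takes $\delta=\varepsilon$ and gets $3\varepsilon\le 4\varepsilon$), and you rerun the extractor argument on $B$ directly where the paper instead observes that restricting the left side of an extractor graph preserves the extractor property and then applies the lemma as stated.
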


\begin{proof}
  Note that after deleting some left nodes
  from a $(k,\varepsilon)$ extractor graph,  what is left is still a $(k,\varepsilon)$ extractor graph.
  More precisely, for any graph $G=(L,R,E)$, a subset $L' \subseteq L$ defines 
  a subgraph $G'=(L',R,E' \subseteq E)$ where $E'$ are all edges in $E$ leaving from $L'$. 
  If $G$ is an extractor graph, then also $G'$ is an extractor graph: indeed, we must verify
  \eqref{eq:extractorDef} for all $B \subseteq L$ of size at least $2^k$ in $G'$ and $A \subseteq R$, 
  but every element in $B$ has the same edges in $G'$ and hence defines the same probabilities.

  Apply Lemma~\ref{lem:c_righ_to_rich} with $s = a/\varepsilon$ and $\delta = \varepsilon$;  
  thus $t = sn/\delta = an/\varepsilon^2$ and this guarantees 
  the existence of $H$ such that conditions~\ref{item:increases} and~\ref{item:extentionPolyTime} are satisfied.
  Clearly the graph is left regular, and because extractor-graphs are non-empty, this also holds for $H$.
  To show item~\ref{item:richNodes}, we use  Lemma~\ref{lem:extractorsAre_c_rich}. 
  As discussed above, the extractor property remains true if the left set is restricted to a subset $B$, 
  thus we can apply  Lemma~\ref{lem:extractorsAre_c_rich}. 
  Hence, all but at most $2^k$
  nodes in $B$ are $(a/\varepsilon,2\varepsilon)$-rich in $B$, and by the last part of  Lemma~\ref{lem:c_righ_to_rich}, 
  they are $4\varepsilon$-rich in $B$ relative to~$H$.
\end{proof}

Finally, we are ready to finish the proof.
\smallskip

\begin{proof}[of Theorem~\ref{cor:extactorsWithRichElements}]
 We prove (1), the proof for (2) is similar (we use
 the extractor from Theorem~\ref{t:extpoly} instead of  Theorem~\ref{t:extopt}).

 Let $G = (L = \zo^n, R = \zo^m, E)$ be an extractor satisfying the conditions of  Theorem~\ref{t:extopt} and suppose $B$ is a
 subset of $L$ of size at most $2^{k+c}$. Let us compute an upper bound for the average right degree of the
 subgraph containing all edges leaving from $B$: 
 \[
 \frac{|B|D}{2^m} \leq 2^c \frac{2^kD}{2^m}\,
 \]
 \ie, $2^c \times 2^{\mbox{(the entropy loss)}}$.  Since the entropy loss in Theorem~\ref{t:extopt} is $2 \log (1/\varepsilon) + O(1)$, the average right degree is
 $O(2^c/\varepsilon^2)$.

 Next we choose $\delta$ and $\ell$:
 \[
 \delta = 4\varepsilon \text{\quad\quad and\quad\quad} \ell = k + c \,.
 \]
 We apply
  Proposition~\ref{prop:extractorsWithRichElements}  and conclude 
 that all but at most $2^{\ell-c}$ left nodes of $H$ 
 are $\delta$-rich in $B$. 
 Any left node in a non-empty left-regular graph has at least one neighbor, 
 thus the graph has the rich owner properties for parameters $(l,c,\delta)$.
 It remains to check the claimed values for $d$ and $m$.

 In the extractor, the left degree $D \le O(n/\varepsilon^2) = \Theta(n/\delta^2)$.
 The new graph has left degree $O(Dan/\varepsilon^2)$ and 
 \[
 D\cdot a \cdot n/\varepsilon^2 = O\left(\frac{n}{\delta^2} \frac{2^c}{\delta^2}\frac{n}{\delta^2}
 \right)\,.
 \]
 Hence, the logarithm of the left degree is 
  $d = O\left(\log Dan/\varepsilon^2\right) = O\left(c + \log (n/\delta)\right)$.
 
 The logarithm of the size of the set of right vertices is
 \[
  m = \left(k + \log D - 2\log (1/\varepsilon) + O(1) \right) + \log |S|\,.
 \]
 Note that $|S| \le O\left( (Dan/\varepsilon^2)^3 \right)$, so $\log |S|$ is
 $O\left(c + \log (n/\delta)\right)$. Similar for $\log D$ and 
 $\log (1/\varepsilon)$, and thus
 $m = (\ell-c) + O\left(c + \log (n/\delta)\right)$.
\end{proof}

\section{Lower bounds}
\label{s:lb}

There are three parameters of interest in the main result Theorem~\ref{t:main}, which on input $x$ constructs with high probability a list containing a short program for $x$:
\smallskip

 $T$ = the size of the list, 

$r$ = the number of random bits used in the construction, and

$c$ = the closeness to $C(x)$ of the short program guaranteed to exist in most lists.
\smallskip

 In Theorem~\ref{t:main}, we show $T=n$, $r = O(\log n)$, and $c = O(\log n)$, where $n$ is the length of the string $x$ (for simplicity, we have assumed here constant error probability $\delta$). We show here that Theorem~\ref{t:main} is tight, in the sense
that, essentially,  none of these parameters can be reduced while keeping the other two at the same level.

 To prove the lower bounds, we need to specify the model carefully. A \emph{probabilistic algorithm that list-approximates short programs} is given by a Turing machine $F$ that takes an input $x$, and a sequence $\rho$ (of random bits)  and satisfies the following properties:
\smallskip

(a) $|\rho| = r$ depends only on the length of $x$,

(b) for all $x$ and $\rho$ of length $r$, $F(x,\rho)$ halts and outputs a finite set of strings $L_{x,\rho}$ (which we typically call a \emph{list}),

(c)  the size $|L_{x,\rho}| = T$, depends only on the length of $x$.  

(d) for all $x$, at least $1/2$ of the sets 
 $\{L_{x,\rho}\}_{|\rho| = r}$ contain a $c$-short program for $x$. (Since we seek lower bounds, assuming $\delta = 1/2$, implies at least as strong lower bounds for smaller~$\delta$).
\smallskip

 After these preparations, we can state the lower bounds.

 \begin{theorem}
\label{t:lb}
 For any probabilistic algorithm that list-approximates short programs with parameters $T$, $r$ and $c$,
 \smallskip
 
 (1) $r \geq 2 \log n - \log T - 2 \log c - O(1)$. In particular, if $T =n$, and $c = O(\log n)$, then $r \geq \log n - O(\log \log n)$.
 \smallskip
 
 (2) $c \geq \log (n^2/T) - 2 \log \log (n^2/T)$. In particular, if $T = O(n)$, then $c \geq \log n - 2\log \log n - O(1)$.
 \smallskip
 
 (3) $T = \Omega(n/(c+1))$.

 \end{theorem}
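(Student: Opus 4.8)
\emph{Overall strategy.} I would obtain all three bounds by reducing a probabilistic list-approximator $F$ with parameters $(T,r,c)$ to one of the two \emph{deterministic} lower bounds recalled in the introduction: every computable list containing $C(x)$ has $\Omega(n)$ elements~\cite{bei-bur:j:enumerations}, and every computable list containing a $c$-short program for $x$ has $\Omega(n^{2}/(c+1)^{2})$ elements~\cite{bmvz:c:shortlist}. The naive derandomization — run $F(x,\rho)$ for all $2^{r}$ strings $\rho$ (each run halts) and output $G(x)=\bigcup_{\rho}L_{x,\rho}$ — is a computable list of size $\le T\cdot 2^{r}$ containing a $c$-short program for $x$, since at least half the $\rho$ are good. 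Feeding this into~\cite{bmvz:c:shortlist} gives $T\cdot 2^{r}=\Omega(n^{2}/(c+1)^{2})$, i.e. $r\ge 2\log n-\log T-2\log(c+1)-O(1)$, which is (1); substituting $T=n$ and $c=O(\log n)$ yields $r\ge\log n-O(\log\log n)$. For (2) and (3), where $r$ is unconstrained, the factor $2^{r}$ is fatal, and I would replace the union by a probability-weighted version: keep, for each $x$, only the programs (or program lengths) that occur in $L_{x,\rho}$ for a non-negligible fraction of $\rho$. Since $|L_{x,\rho}|=T$ for every $\rho$, we have $\sum_{p}\Pr_{\rho}[p\in L_{x,\rho}]=T$, so the number of such ``frequent'' items is bounded by $T$ alone, with no dependence on $r$.

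\emph{Part (3): threshold on program lengths.} If $\rho$ is good, $L_{x,\rho}$ contains a program for $x$ whose length lies in the window $W_{x}=[C(x),C(x)+c]$ — at least $C(x)$ because it computes $x$, at most $C(x)+c$ because it is $c$-short. Averaging over the $\ge 1/2$ fraction of good $\rho$ and over the $c+1$ lengths of $W_{x}$, some single $\ell^{\star}\in W_{x}$ has the property that for at least a $\tfrac{1}{2(c+1)}$ fraction of $\rho$ the list $L_{x,\rho}$ contains a length-$\ell^{\star}$ program computing $x$. Let $\Lambda(x)$ be the set of all lengths with this property. From $\sum_{p}\Pr_{\rho}[p\in L_{x,\rho}]=T$ we get $|\Lambda(x)|\le 2T(c+1)$; every element of $\Lambda(x)$ is $\ge C(x)$ because it is realized by an actual program for $x$, and $\ell^{\star}$ witnesses $\min\Lambda(x)\in W_{x}$. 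Thus $F$ yields, for each $x$, a list of $O(T(c+1))$ integers whose minimum approximates $C(x)$ to within additive $c$; carrying out the argument of~\cite{bei-bur:j:enumerations} with this slack (which costs only $O(c)$ in the size bound, and tolerates that $\Lambda(x)$ is merely enumerable — confirming $U(p)=x$ is semidecidable — as long as its size is bounded a priori) forces $T(c+1)=\Omega(n)$, i.e. $T=\Omega(n/(c+1))$. (If one insists on a decidable list, drop the requirement that the length-$\ell^{\star}$ program compute $x$ and output $\{\ell-j:\ell\in\Lambda(x),\ 0\le j\le c\}$, a computable list of $O(T(c+1)^{2})$ values containing $C(x)$; this still gives a bound of the same shape.)

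\emph{Part (2): threshold on the programs themselves.} Fix $\theta=2^{-(c+\log c)}$ and set $\tilde F(x)=\{\,p:|p|\le n+c+O(1),\ \Pr_{\rho}[p\in L_{x,\rho}]\ge\theta\,\}$; this is computable from $x$ (the probability is a finite quotient) and has size $\le T/\theta=Tc\,2^{c+O(1)}$ by the same counting identity. Call $x$ \emph{bad} if none of its $c$-short programs is $\theta$-frequent; then the $\ge 2^{r-1}$ good $\rho$ designate $c$-short programs of $x$ spread over more than $\tfrac{c}{2}2^{c}$ distinct programs, all of length $\le C(x)+c$. Since distinct strings have disjoint sets of programs and there are fewer than $2^{k+c+1}$ programs of length $\le k+c$, fewer than an $O(1/c)$ fraction of the strings at each complexity level $k$ — hence fewer than an $O(1/c)$ fraction of all of $\zo^{n}$ — are bad. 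For every good $x$, $\tilde F(x)$ contains a $c$-short program for $x$, so $\tilde F$ is a computable list-approximator of $c$-short programs that succeeds on all but an $O(1/c)$ fraction of $\zo^{n}$; the lower bound of~\cite{bmvz:c:shortlist} (a counting bound that survives such a small failure set) then gives $Tc\,2^{c}=\Omega(n^{2}/(c+1)^{2})$, i.e. $2^{c}=\Omega\!\big(n^{2}/(Tc^{3})\big)$, so $c\ge\log(n^{2}/T)-2\log\log(n^{2}/T)-O(1)$, and $c\ge\log n-2\log\log n-O(1)$ when $T=O(n)$.

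\emph{Main obstacles.} The conceptual point needing care is the apparent circularity — ``to run $F$ on $x$ one already needs $x$'' — which is sidestepped because $F$ is never used to \emph{describe} a given $x$: the objects $\tilde F(x)$ and $\Lambda(x)$ are functions of $x$ computed from $F$, and it is their mere existence, together with a counting argument over all of $\zo^{n}$, that the deterministic lower bounds consume. I expect the genuinely delicate steps to be: (i) verifying that the Beigel-et-al.\ construction is robust against merely enumerable lists of a priori bounded size and against the additive-$c$ slack, as required for the sharp form of (3); (ii) in (2), choosing $\theta$ so the bad set stays negligible while the list size stays $T\cdot 2^{c+o(c)}$ — this trade-off is precisely what produces the $2\log\log$ term rather than a cruder $O(\log c)$; and (iii) checking that the $\Omega(n^{2}/(c+1)^{2})$ bound of~\cite{bmvz:c:shortlist} is insensitive to failure on an $O(1/c)$ fraction of inputs, and tracking the remaining constants.
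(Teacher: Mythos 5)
Your part (1) is exactly the paper's argument (union over all $2^r$ seeds, then the $\Omega(n^2/(c+1)^2)$ bound of~\cite{bmvz:c:shortlist}). Parts (2) and (3) take different routes, and each has a genuine gap.

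In (3), you threshold exact lengths at $\frac{1}{2(c+1)}$, getting $|\Lambda(x)|\le 2T(c+1)$, and you must then still pay the window-expansion factor $c+1$ to turn ``a list whose minimum is within $c$ of $C(x)$'' into ``a list containing $C(x)$'' before the Beigel et al.\ bound applies --- your own fallback list $\{\ell-j:\ell\in\Lambda(x),0\le j\le c\}$ has size $O(T(c+1)^2)$, and the primary route fares no better since $\min\Lambda(x)$ is not computable from the merely enumerable $\Lambda(x)$. Either way you obtain $T(c+1)^2=\Omega(n)$, i.e.\ $T=\Omega(n/(c+1)^2)$, a factor $c+1$ short of the claimed $T=\Omega(n/(c+1))$; the assertion that the slack ``forces $T(c+1)=\Omega(n)$'' does not follow from what you set up. The paper pays the $(c+1)$ factor only once by thresholding the \emph{windowed} event at $1/2$: call $m$ significant if for at least half the $\rho$ some length in $\{m,\dots,m+c\}$ occurs in $L_{x,\rho}$. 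Then $C(x)$ itself is significant, each list certifies at most $(c+1)T$ integers, so the significant set is a computable list of size at most $2(c+1)T$ containing $C(x)$ exactly, and Beigel et al.\ gives the stated bound directly.

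In (2), your derandomized approximator $\tilde F$ is only guaranteed to succeed outside a set of ``bad'' $x$ of density $O(1/c)$, and you then assert that the $\Omega(n^2/(c+1)^2)$ lower bound of~\cite{bmvz:c:shortlist} tolerates such a failure set. That lower bound is not a simple counting bound --- it is proved by an adversary argument that forces the complexity of particular strings --- and nothing guarantees those strings avoid your failure set; you flag this yourself as delicate but do not resolve it. The paper avoids the issue entirely via Chaitin's theorem: every $x$ has only $O(2^c)$ $c$-short programs, so for \emph{every} $x$ some single $c$-short program appears in an $\Omega(2^{-c})$ fraction of the lists, the frequent-string list has size $O(T2^c)$, and there are no bad strings at all. (Your choice $\theta=2^{-(c+\log c)}$ also costs an extra $\log c$ in the list size, degrading the $-2\log\log$ term to $-3\log\log$.)
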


 \begin{proof}
  
  (1) The union $L = \bigcup_{|\rho| = r} L_{x,\rho}$ is a computable set with $R \cdot T$ elements, where $R=2^r$, and contains a $c$-short program for $x$.
  It is known from~\cite{bmvz:c:shortlist} that any such set must have size $\Omega(n^2/(c+1)^2)$. Thus for some constant $c_1$,
  $R \cdot T \geq c_1 \cdot(n^2/(c+1)^2)$, and thus $r \geq 2 \log n - \log T - 2 \log c - O(1)$.
  \smallskip
  
  (2)  Let ${\cal P}$ be the set of $c$-short programs for $x$. By a result of Chaitin~\cite{cha:j:shortprog} (see also Lemma 3.4.2 in~\cite{dow-hir:b:algrandom}), we know that $\ell  =|{\cal P}|$ satisfies $\ell = O(2^c)$. Since at least half of the $R = 2^r$  lists $L_{x,\rho}$, with $\rho$ ranging in $\zo^r$,  contain an element of ${\cal P}$, there is some element of ${\cal P}$ that belongs to at least $1/(2\ell)$ fraction of lists. Clearly the length of this element is bounded by $n+d$, where $d$ is a constant that depends on the universal machine.
  In steps $m = 1, 2, \ldots, n+d$, we select all the strings of length $m$ that appear in at least $1/(2 \ell)$ of the lists. As argued above, there exists a $c$-short program for $x$ among the selected elements. Let us estimate how many strings have
  been selected. Let $s_m$ be the number of elements selected at the $m$-th step of the selection procedure.
  The elements selected at step $m$ occur at least $s_m \cdot \frac{R}{2 \ell}$ times in the union $L = \bigcup_{|\rho| = r} L_{x,\rho}$. Since $|L| = R \cdot T$, we obtain
  \[
  R \cdot T \geq s_1 \cdot \frac{R}{2\ell} + s_2 \cdot \frac{R}{2\ell} + \ldots + s_{n+d} \cdot\frac{R}{2\ell}.
   \]
  Thus, $s_1 + s_2 + \ldots + s_{n+d} \leq T \cdot 2 \ell$. By the same result from~\cite{bmvz:c:shortlist}, the total number of selected elements is at least $c_1 \cdot n^2/(c+1)^2$, for some constant $c_1$, because some $c$-short program is selected. Thus, 
  \[
  T \cdot 2 \ell \geq s_1 + s_2 + \cdots + s_{n+d} \geq c_1 \frac{n^2}{(c+1)^2}.
  \]
 It follows that $2^c \geq c'_1 \cdot n^2 \cdot (1/T) \cdot (1/(c+1)^2)$, for some constant $c'_1$. The conclusion follows after some simple calculations.
 \smallskip
 
 (3)  Let $L'_{x,\rho}$ be the set of lengths of strings in $L_{x,\rho}$. We say that an integer $m$ has a \emph{pseudo-presence} in  
 $L'_{x,\rho}$ if at least one of the values $m, m+1, \ldots, m+c$ is in $L'_{x,\rho}$. Clearly, at most $(c+1)T$ integers have a pseudo-presence in $L'_{x,\rho}$. We say that $m$ is \emph{significant} if it has a pseudo-presence in at least half of the sets $L'_{x,\rho}$, $|\rho| = r$.
 Note that the set of significant integers contains $C(x)$, because the union of all lists contains a $c$-short program for $x$. By a result of~\cite{bei-bur:j:enumerations}, any computable set containing $C(x)$ must have
 size at least $n/a$ for some constant $a$. Thus, there are at least $n/a$ significant integers.
 Each significant integers has at least $R/2$ pseudo-presences in the union of all the lists. We
 obtain that $(n/a) \cdot (R/2) \leq R \cdot (c+1)T$, which implies $T \geq (1/(c+1)) \cdot (n/(2a))$.
 \end{proof}

\emph{Note.} The lower bound in (3) holds even for randomized algorithm that may not halt on some probabilistic branches. This can be proved in the same way, taking advantage of the fact that the lower bound in~\cite{bei-bur:j:enumerations} holds true even for for algorithms that \emph{enumerate} a list of possible values for $C(x)$. 

\section{Randomized vs. deterministic computation}
\label{s:randdeterm}

We present 
a non-trivial task that shows the superior power of randomized algorithms versus deterministic algorithms. This task
\smallskip

(1) cannot be solved  by any deterministic algorithm that runs
in  computably bounded time, but
\smallskip

 (2)  can be solved by a randomized algorithm in
polynomial time,  and, furthermore, the  number of random bits is only polylogarithmic.
\smallskip

\noindent
Moreover, the task is natural in the sense that it is not artificially constructed to satisfy the
conditions.  
The task is a promise problem, meaning that the input is guaranteed to satisfy a certain property.
\medskip

TASK T: On input $(x, C(x))$, 
generate a $c(|x|)$-short program for $x$. 
\medskip

 If the input is of the form $(x,\ell)$  with $\ell \not= C(x)$ (\ie, the promise is not satisfied), then no output needs to
be generated or the output can be an arbitrary string. Note that in case the promise is satisfied,
then  by exhaustive search  we can  always find a $c(|x|)$-short program even with $c(|x|)=0$.
However,  we will see  in Lemma~\ref{l:noComputablyBounded}, that, for some  $c(n) = O(n)$,  no deterministic algorithm can succeed in computably bounded time.

First we show that for some $c(n) = O(\log^2 n)$ task T can be solved in randomized polynomial 
time using only a polylogarithmic number of random bits.

\if01
\begin{theorem}
There exists a polynomial-time randomized algorithm that on input $(x,\ell, 1/\delta)$, where $x$ is an $n$-bit string, and $\ell$ and $1/\delta$ are positive integers, returns a string $z$. If $C(x)=\ell$, then with probability
$(1-\delta)$, $z$ is a $O(\log^2 (n/\delta))$-short program for $x$. Furthermore, the algorithm uses $O(\log^2 (n/\delta))$ random bits.
\end{theorem}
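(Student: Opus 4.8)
The plan is to reduce this directly to the polynomial-time fixed-length construction, Claim~\ref{c:fixedlengthPolytime}, invoked at the level $\ell+1$. The only gap is that Claim~\ref{c:fixedlengthPolytime} succeeds not for \emph{every} $n$-bit string with $C(x)<\ell+1$, but only for all but an exceptional set of at most $2^{(\ell+1)-c}$ of them, and a priori our input $x$ could be exceptional. I would close this gap using the promise $C(x)=\ell$: by taking $c$ to be a sufficiently large multiple of $\log(n/\delta)$, one can force every exceptional string to have complexity strictly below $\ell$, so that $x$ cannot be one of them.

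In more detail, I would first observe, from the proof of Claim~\ref{c:fixedlengthPolytime} (equivalently Claim~\ref{c:fixedlength}), that for parameters $(\ell+1,c,\delta)$ the strings on which the algorithm may fail all lie in the set $E$ of nodes of $B_{n,\ell+1}=\{x'\in\zo^n:C(x')<\ell+1\}$ that are \emph{not} $\delta$-rich in $B_{n,\ell+1}$ relative to the graph $G_{n,\ell+1}$ of Theorem~\ref{cor:extactorsWithRichElements}(2). This set has size at most $2^{(\ell+1)-c}$, and it is enumerable uniformly in $n,\ell,c,1/\delta$: membership in $B_{n,\ell+1}$ is semidecidable, and, the left degree being finite, so is the property ``more than a $\delta$ fraction of the neighbors of $x'$ in $G_{n,\ell+1}$ are shared in $B_{n,\ell+1}$'' (a right node is confirmed shared as soon as two of its neighbors are confirmed to lie in $B_{n,\ell+1}$). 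The usual counting bound for enumerable sets then gives $C(x')\le \ell+1-c+C(n,\ell+1,c,1/\delta)+O(\log n)$ for every $x'\in E$; since $c$ will be chosen as a function of $n$ and $\delta$ only, and $\ell+1\le n+O(1)$ under the promise, the overhead here is $O(\log(n/\delta))$. Hence, taking $c=K_0\log(n/\delta)$ for a large enough absolute constant $K_0$, every element of $E$ has complexity $<\ell$, so $x\notin E$.

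It then remains to collect the parameters. Running the algorithm of Claim~\ref{c:fixedlengthPolytime} on $x$ with level $\ell+1$, closeness $c=\Theta(\log(n/\delta))$, and error $\delta$ produces, in polynomial time, a string $z$ that with probability $1-\delta$ satisfies $U(z)=x$ and has length $(\ell+1)+O(c+\log^2(n/\delta))=\ell+O(\log^2(n/\delta))=C(x)+O(\log^2(n/\delta))$, using $O(c+\log^2(n/\delta))=O(\log^2(n/\delta))$ random bits; when the promise fails the algorithm simply returns whatever Claim~\ref{c:fixedlengthPolytime} outputs (preceded by the trivial check $\ell\le n+O(1)$ so that the running time stays polynomial). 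The step I expect to require the most care is pinning down the exceptional set: one must check that it really is the concrete, uniformly enumerable set produced by the proof of Claim~\ref{c:fixedlengthPolytime}, rather than merely some abstractly guaranteed set of size $2^{(\ell+1)-c}$, and then keep careful track of the $O(\log(n/\delta))$ description overhead, so as to see simultaneously that a $\Theta(\log(n/\delta))$ choice of $c$ pushes all exceptional complexities below $\ell$ and keeps both the program-length overhead and the number of random bits within $O(\log^2(n/\delta))$.
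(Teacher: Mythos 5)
Your proof is correct and follows essentially the same route as the paper's: the paper likewise reduces to Claim~\ref{c:fixedlengthPolytime} applied at level $\ell+1$ with $c=\Theta(\log n)$, and rules out $x$ being exceptional by observing that the exceptional (non-$\delta$-rich) strings form a uniformly enumerable set of size at most $2^{\ell+1-c}$, hence each has complexity below $\ell$, contradicting the promise $C(x)=\ell$. The points you flag as delicate --- that the exceptional set must be the concrete enumerable set coming from the claim's proof rather than an abstract small set, and that $c$ must be taken as a large enough multiple of $\log(n/\delta)$ --- are exactly what the paper's argument relies on as well.
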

\fi
\begin{theorem}
\label{t:probalg}
There exists a polynomial-time randomized algorithm that on input $(x,C(x),1/\delta)$ returns a string $z$ that, with probability
$(1-\delta)$, is  a $O(\log^2 (|x|/\delta))$-short program for $x$. Furthermore, the algorithm uses
$O(\log^2 (|x|/\delta))$ random bits.
\end{theorem}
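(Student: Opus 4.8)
The plan is to produce a \emph{single}‑output analogue of the list construction of Theorem~\ref{t:mainpoly}, exploiting the fact that the input now carries the exact value $C(x)$, which tells us at which ``level'' the short program of $x$ lives. Recall that Claim~\ref{c:fixedlengthPolytime} already delivers, from $(x,\ell,c,\delta)$ in polynomial time and with $O(c+\log^2(n/\delta))$ random bits, one program of length $\ell+O(c+\log^2(n/\delta))$ that computes $x$ with probability $1-\delta$ provided $x$ is a $\delta$‑rich owner in the explicit graph $G_{n,\ell}$ of Theorem~\ref{cor:extactorsWithRichElements}(2) relative to the enumerable set $B_{n,\ell}$; the mechanism is simply ``pick a random right neighbour $z$ of $x$ and output the associated program $\hat z=$ `enumerate $B_{n,\ell}$, output the first neighbour of $z$\,' ''. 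So the first move is: on input $(x,C(x),1/\delta)$, put $n=|x|$, work with levels $\ell$ near $C(x)$ and with $c=\Theta(\log^2(n/\delta))$, run that primitive, and read off the output; by Theorem~\ref{cor:extactorsWithRichElements}(2) it has length $\ell+O(\log^2(n/\delta))$ and uses $d=O(c+\log^2(n/\delta))=O(\log^2(n/\delta))$ random bits, which are the target parameters.

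The first thing to repair is that Claim~\ref{c:fixedlengthPolytime} may fail on the (few) strings that are not $\delta$‑rich owners, whereas Task~T demands the guarantee for \emph{every} $x$ with $C(x)=\ell$. I would close this exactly as in the proof of Theorem~\ref{t:mainpoly}: replace $B_{n,\ell}=\{x'\in\zo^n:C(x')<\ell\}$ by the enlarged sets that, in addition, contain those few $x'$ that are not $\delta$‑rich in $G_{n,\ell+1}$ relative to $B_{n,\ell+1}$; a downward induction (carried down to the level where the rich‑owner property is vacuous, so the bottom level has no non‑rich nodes) shows each enlarged $B_{n,\ell}$ is recursively enumerable from $(n,\ell,c,\delta)$ and has size at most $2^\ell$, so the rich‑owner property of $G_{n,\ell}$ still applies and \emph{every} $n$‑bit $x$ is a $\delta$‑rich owner in $G_{n,\ell^\ast}$ relative to $B_{n,\ell^\ast}$ for some $\ell^\ast\le C(x)+O(1)$. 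The useful further observation is that the relevant levels form a \emph{narrow window}: if $x$ is a $\delta$‑rich owner at level $\ell$, then at least one of its neighbours $z$ is owned by $x$ in $B_{n,\ell}$, so $\hat z$ is a genuine program for $x$ of length $\ell+O(\log^2(n/\delta))$, whence $\ell\ge C(x)-O(\log^2(n/\delta))$. Combining, $\ell^\ast$ is confined to the $O(\log^2(n/\delta))$‑wide window $\{\,C(x)-O(\log^2(n/\delta)),\dots,C(x)+O(1)\,\}$, and the associated program at \emph{any} level in this window has length at most $C(x)+O(\log^2(n/\delta))$.

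The remaining point — and the one I expect to be the main obstacle — is that we must output a single string, hence commit to one level, and ``$x$ is a $\delta$‑rich owner at level $\ell$ relative to $B_{n,\ell}$'' is a co‑enumerable condition, so $\ell^\ast$ cannot be found by a bounded search. I would attack it by walking down from the top of the window ($\ell=C(x)+O(1)$, where $x$ is certainly a first‑type element of $B_{n,\ell}$): at each step, dovetail the enumeration of $B_{n,\ell}$ and of the programs of length $<\ell$ to try to \emph{refute} richness (i.e.\ confirm that more than a $\delta$‑fraction of $x$'s neighbours already have a second neighbour in $B_{n,\ell}$), moving one level down whenever refutation succeeds; since the window has width $O(\log^2(n/\delta))$ this search visits only $O(\log^2(n/\delta))$ levels, and the level at which it stalls is $\ell^\ast$. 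One then outputs the program $\hat z_{\ell^\ast}$ built from a random neighbour $z_{\ell^\ast}$ of $x$ in $G_{n,\ell^\ast}$, which has length $m_{\ell^\ast}+O(\log(n/\delta))=\ell^\ast+O(\log^2(n/\delta))\le C(x)+O(\log^2(n/\delta))$ and computes $x$ with probability $1-\delta$ over the choice of $z_{\ell^\ast}$. The care needed here is to make this last step run in \emph{polynomial} time (so the dovetailing to identify $\ell^\ast$ must either be replaced by a genuinely polynomial‑time certificate — e.g.\ letting the output program itself carry the $O(\log\log(n/\delta))$‑bit offset of $\ell^\ast$ inside the window and perform the short search internally — or the whole construction must be set up so that a fixed level of the window works); all other bookkeeping (polynomial running time of graph evaluation, the $O(\log^2(n/\delta))$ bound on random bits, and the $O(\log^2(n/\delta))$ closeness to $C(x)$) is inherited directly from the explicit graphs of Theorem~\ref{cor:extactorsWithRichElements}(2).
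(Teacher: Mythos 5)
You correctly reduce the problem to Claim~\ref{c:fixedlengthPolytime} and correctly identify the obstacle: the claim only works for strings that are rich owners, and ``$x$ is rich at level $\ell$'' is a co-enumerable condition, so the level $\ell^\ast$ at which your enlarged-$B_{n,\ell}$ construction succeeds cannot be located by a polynomial-time (or even computably time-bounded) search. But you then leave this obstacle unresolved: the ``walk down and refute richness'' procedure is not time-bounded, and the fallback of letting the output program carry the offset of $\ell^\ast$ does not work either, because the \emph{outer} algorithm must already know $\ell^\ast$ in order to compute the right node $z=E(x,y)$ of the correct graph $G_{n,\ell^\ast}$ that gets encoded into the output; the program performing the search internally cannot supply that. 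Your other fallback (``set things up so a fixed level works'') is exactly what is needed, but you give no argument for it. So the proposal has a genuine gap at its decisive step.

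The missing idea is a short complexity argument showing that the single fixed level $\ell=C(x)+1$ always works, with no enlargement of $B_{n,\ell}$ and no search over levels. Apply the claim with $\ell=C(x)+1$ and $c$ of order $\log(n/\delta)$ (the paper takes $c=3\log n$). The bad set at this level --- the at most $2^{\ell-c}$ strings of $B_{n,\ell}$ that are not $\delta$-rich --- can be effectively identified from $(n,\ell,c,\delta)$ (non-richness is a semi-decidable property once $B_{n,\ell}$ is enumerable; equivalently, one can enumerate the strings on which the claim's algorithm provably succeeds and take the complement). Hence every bad string $x'$ satisfies $C(x')\le(\ell-c)+2\log n+2\log c+2\log(1/\delta)+O(1)<\ell-1=C(x)$. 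Since $x$ has complexity exactly $\ell-1$, it cannot be bad, so it is $\delta$-rich at this one level and the claim's algorithm already outputs, with probability $1-\delta$, a program of length $\ell+O(c+\log^2(n/\delta))=C(x)+O(\log^2(n/\delta))$ in polynomial time with $O(\log^2(n/\delta))$ random bits. Your window analysis is not wrong, but it is unnecessary, and without the above argument (or an equivalent one) your construction does not yield a polynomial-time algorithm.
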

\begin{proof}
The result is a corollary of Claim~\ref{c:fixedlengthPolytime}. Let $n$ be the length of $x$ and we
apply the claim with $\ell = C(x)+1$ and $c=3\log n$. The algorithm in the claim provides  

a $O(\log^2 (n/\delta))$-short program for $x$ unless $x$ belongs to some ``bad'' set of at most $2^{\ell - c}$
strings. We show that for large~$n$ and~$\ell$, strings in this set satisfy $\ell > C(x) + 1$,
hence, $x$ can not belong to this set. 

Indeed, apply the algorithm of the claim on all strings of length~$n$ and all random seeds of sufficient
length. Subsequently, search for $2^n - 2^{\ell-c}$ strings~$x$ for which at least a fraction $1-\delta$
of the generated programs print~$x$. The bad strings are contained in the remaining $2^{\ell - c}$
strings. Hence, such strings $x$ satisfy
\[
C(x) \leq \ell - c + 2\log n + 2\log c + 2\log (1/\delta) < \ell-1. 
\]
\end{proof}
Next, we show that task $T$ cannot be solved by an algorithm in computably bounded time.

\begin{lemma}\label{l:noComputablyBounded}
  For each standard machine $U$ there exists an $e$ such that 
 there is no computable function $t(n)$ and no 
 algorithm that on input 
 $(x, C_U(x))$ outputs a $|x|/e$-short program for $x$ in time $t(|x|)$.
\end{lemma}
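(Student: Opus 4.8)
I would argue by contradiction via a diagonalization, in the spirit of the Zvonkin--Levin fact quoted in the introduction (no effective procedure lower bounds $C_U$): assuming an algorithm of the forbidden kind exists, I effectively produce a string of small complexity on which it must fail. It suffices to treat a single constant $e$ (say $e\ge 3$); note moreover that impossibility for one value of $e$ immediately gives it for all larger ones, since a solver for a larger $e$ is also a solver for a smaller one, so this yields the claimed ``there exists $e$''. So suppose, for contradiction, that there are a computable nondecreasing $t$ and an algorithm $A$ which, on every input $(x,C_U(x))$ with $|x|=n$, halts within $t(n)$ steps and outputs a program $p$ with $U(p)=x$ and $|p|\le C_U(x)+\lfloor n/e\rfloor$. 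Fix also a constant $e_0$ with $C_U(x)<|x|+e_0$ for all $x$.

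\textbf{The diagonal string.} Using the recursion theorem, fix a machine $N$ and an $O(1)$-length program $\nu$ with $U(\nu\langle n\rangle)=N(n)$ for all $n$, where $N$ may use $\nu$ during its computation; put $k_n:=|\nu|+|\langle n\rangle|+O(1)=\log n+O(\log\log n)$, so every output of $N$ on input $n$ has $C_U$-complexity at most $k_n$. On input $n$ (large), $N$ runs $A(y,\ell)$ for $t(n)$ steps over all $y\in\{0,1\}^n$ and all $\ell\le k_n$, keeps the outputs of length at most $k_n+\lfloor n/e\rfloor$, simulates $U$ on each retained program for $T_N(n)$ steps --- where $T_N$ is a computable function, fixed via the recursion theorem, that bounds $N$'s own running time on input $n$ --- and collects the resulting length-$n$ strings into a set $R_n$. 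At most $2^{k_n+\lfloor n/e\rfloor+1}<2^n$ strings can be collected (since $e\ge 3$ and $n$ is large), so $N(n)$ can output the lexicographically least string $x_n\in\{0,1\}^n\setminus R_n$.

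\textbf{The contradiction.} By construction $C_U(x_n)\le k_n$, so $\ell_0:=C_U(x_n)$ is one of the indices examined by $N$. Feeding $A$ the promise-satisfying input $(x_n,\ell_0)$ yields, within $t(n)$ steps, a program $p$ with $U(p)=x_n$ and $|p|\le \ell_0+\lfloor n/e\rfloor\le k_n+\lfloor n/e\rfloor$; hence $p$ is among the retained programs. But $x_n$ also has the length-$k_n$ program $\nu\langle n\rangle$, whose $U$-computation halts within $\approx T_N(n)$ steps; once one arranges (again via the recursion theorem, and if necessary after first replacing $A$ by the variant that, among all length-admissible candidates for $x_n$, returns one whose decompression is detected within $T_N(n)$ steps) that the $p$ actually returned also halts within the simulation budget, the equality $U(p)=x_n$ is detected during $N$'s simulation, so $x_n\in R_n$ --- contradicting $x_n\notin R_n$.

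\textbf{Main obstacle.} The genuinely delicate point is controlling the running time of $U$ on $A$'s output: a priori it is not bounded by any computable function, so a fixed simulation budget could be too small, and ``$U(p)=x$'' is only semidecidable. The way out is that $x_n$ is itself the output of the computably time-bounded computation $N(n)$, hence has a length-$O(\log n)$ program that runs within the computable bound $T_N(n)$; the construction must be set up --- this is where the recursion theorem does the real work --- so that $T_N$ simultaneously bounds $N$'s own running time and the running time of the specific short program returned by a suitable normal form of $A$. Handling the finitely many small lengths and pinning down the constants is then routine.
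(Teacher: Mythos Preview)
Your argument has a genuine gap precisely at the point you flag as the ``main obstacle,'' and the resolution you sketch does not go through. The machine $N(n)$ is supposed to simulate each retained program for $T_N(n)$ steps, where $T_N$ is a computable bound on $N$'s own running time. But $N$'s running time on input $n$ already includes those simulations (of many programs, each for $T_N(n)$ steps) together with the cost of running $A$ over all $(y,\ell)$; hence the running time of $N(n)$ strictly exceeds $T_N(n)$, and no computable $T_N$ can satisfy the required inequality. The recursion theorem gives $N$ access to its own \emph{code}, but it cannot give $N$ a computable upper bound on its own running time that $N$ then spends as a simulation budget inside the very computation being bounded --- that is a genuine fixed-point-in-time problem, not a fixed-point-in-code problem. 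Your proposed replacement of $A$ by a ``normal form'' that returns a program detected within $T_N(n)$ steps is circular for the same reason: $T_N$ depends on $N$, $N$ depends on the modified $A$, and the modified $A$ depends on $T_N$. In particular, the only short program for $x_n$ that you \emph{know} halts quickly is $\nu\langle n\rangle$ itself, and its running time is essentially $T_N(n)$, so you would again need $T_N(n)$ to bound a computation that strictly contains a $T_N(n)$-step simulation.

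The underlying difficulty is real: $A$ is only promised to output \emph{some} program $p$ with $U(p)=x$, and the running time of $U$ on that $p$ is not computably bounded in $|x|$. The paper's proof sidesteps this entirely by never trying to evaluate $U$ on $A$'s outputs. Instead, it uses a pigeonhole/counting argument: since the list $[f(x,\ell)]_{\ell}$ over a short window of $\ell$'s carries at most $n-n/e$ bits, there is a computable set $S$ of at least $2^{n/e}$ strings sharing the \emph{same} list $L$ (with $|L|\le e-1$); one then exhibits $e$ strings in $S$ whose complexities all fall in that window, forcing $L$ to contain $e$ distinct programs --- a contradiction. No decompression of $A$'s outputs is ever needed. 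If you want to rescue a diagonalization-style proof, you would need a mechanism that avoids bounding $U$'s running time on $A$'s output; as it stands, the argument is incomplete.
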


\begin{proof}
 Suppose the lemma was false, and for some value $2e$ there exists a computable function $t$ 
 that bounds the computation time of such an algorithm computing an $|x|/(2e)$-short program. 
 Then, there is a total computable function $f$ such
 that for all $x$, the value $f(x,C(x))$ is a $|x|/(2e)$-short program.
 We can assume $|f(x,k)| \le k + |x|/(2e)$, because if the length were larger, we could 
 replace it by any shorter string without affecting the assumption on $f$.

 Fix some large length $n$ 
 and consider for all $x$ the list 
 \[
 \left[ f(x, n/(2e) - (e/2 - 1)), \; f(x, n/(2e) - (e/2 - 2)), \; \dots, \;f(x, n/(2e) + (e/2 - 1)) \right].
 \]
 This list contains at most $e-1$ strings and the sum of the lengths is at most $n(e-1)/e = n - n/e$.
 Therefore, there are at most $2^{n - n/e}$ distinct lists. In the sequence of  $2^n$ lists that correspond to
 strings $x$ of length $n$,  we take $L$ to be the list that appears most often (if there is a tie, we break it in some canonical way).  
 Let $S$ be the set of strings $x$ that generate $L$. It follows from above that $S$ has size at least $2^{n/e}$.

 Note that $S$  can be computed from $n$ and $e$. 
We show that for large $e$ the set $S$ contains $e$ strings, $x_1, x_2, \ldots, x_e$ that have complexity between $n/(2e) -(e/2-1)$ and
$n/(2e) + (e/2-1)$. This leads to a contradiction because, by the assumption on $f$, these $e$ distinct strings must each have a program in $L$, but
$L$ has only at most $e-1$ members.

Let $w$ be a string of length $n/(2e)$ with $C(w\mid n, e) \ge
 n/(2e)$ (such a string exists by a counting argument). 
We  interpret $w$ as a natural number (bounded by $2^{n/(2e)}$), and for $i = 1, \dots, e$, 
 let $x_i$ be the lexicographically $(w+i)$-th element of $S$.

 We now prove that for all $i$, $C(x_i) = n/(2e)  \pm  O(\log e)$.
\if01
 Note that 
 \[
   C(x|k,e) = k + l \quad \text{  implies  }\quad   C(x|e) = k + O(l), 
 \]
 (indeed, $C(x|C(x)) = C(x) + O(1)$ thus $C(x|k) - k = C(x) - k + O(|\log C(x) - k|)$, 
 i.e. $|C(x|k) - k| \le O(|C(x) - k|)$ and the same holds with $e$ in the condition).
\fi
 Because $i \le e$, we have that $C(w\mid n,e) = C(x_i \mid n,e) + O(\log e)$.
 The left hand side equals $n/(2e) \pm O(1)$, hence $C(x_i\mid n/(2e), e) = n/(2e) \pm O(\log e)$. 
We can eliminate the $n/(2e)$ in the conditioning by appending to the program that prints $x_i$ a self-delimiting string
of length $O(\log e)$ that represents the difference between the length of the program and $n/(2e)$ (so that the 
quantity $n/(2e)$ present in the condition can now be read from the modified program). Thus, $C(x_i \mid e) = n/(2e) \pm O(\log e)$, and therefore, as claimed,
$C(x_i) = n/(2e) \pm O(\log e)$.

 Choose $e$ large enough such that the  value of the $O(\log e)$ term 
 in the above equation is at most $e/2 - 1$. (Note that the  constant hidden in $O(\log e)$ 
 is machine dependent, and, consequently, so is $e$.) This implies, that indeed  $\left| C(x_i) - n/(2e) \right| \le (e/2 -1)$, which,
 as we have seen, implies a contradiction.
\end{proof}
\smallskip

\emph{Note.} Our methods show that randomized algorithms running in polynomial time can be more powerful than deterministic machines in any level of the arithmetic hierarchy at solving 
non-trivial tasks. For any $m \geq 0$, let $\emptyset^{(m)}$ denote as usual the $m$-th jump of the Halting Problem and let $C^{(m)}(x)$ denote the length of the shortest program that
 prints the string $x$ relative to a fixed standard oracle universal machine $U$ that uses $\emptyset^{(m)}$ as an oracle. Thus $C^{(0)}(x) = C(x)$, the usual Kolmogorov complexity. We define the
notion of \emph{$c$-short $\emptyset^{(m)}$-program} for a string $x$ similarly to $c$-short programs except that we allow $U$ to use the oracle $\emptyset^{(m)}$. Let us consider
the task T$^{(m)}$ (defined analogously to the task $T$):
\smallskip

TASK T$^{(m)}$: On input $(x, C^{(m)}(x) )$, generate a $c(|x|)$-short  $\emptyset^{(m)}$-program for $x$. 
\smallskip

Theorem~\ref{t:probalg} and Lemma~\ref{l:noComputablyBounded} can be relativized in the straightforward way to show that for any $m \geq 0$,
\smallskip

(a) Task T$^{(m)}$ with parameter $c(n) = O(\log^2 n)$ can be solved by a randomized algorithm (without any oracle) that runs in polynomial time and uses
$O(\log^2 n)$ random bits (where $n = |x|$).
\smallskip

(b) Task T$^{(m)}$, for  some parameter $c(n) = O(n)$, cannot be solved by any deterministic algorithm that uses $\emptyset^{(m)}$ as an oracle and runs in time
bounded by any $\emptyset^{(m)}$-computable function.
%

\bibliography{c:/book-text/theory}
\bibliographystyle{ACM-Reference-Format-Journals}

\appendix
\section{Another non-computable task with trivial randomized solution} 

The task is to select for any $x$ a value $0 < k \le |x|$ such that $|C(x) - k| \ge 2\log |x|$.
It is probabilistically solvable with probability error at most $\epsilon$ by using only a constant amount of random bits: 
Choose at random an integer $r$ in the range $0 < r \le 1/\epsilon$, and return $\varepsilon r|x|$.
By the following lemma this task is not deterministically solvable.

  \begin{lemma}\label{lem:nonCompTask}
    There exist no computable function $f$ such that $0 < f(x) \le |x|$ and such that $|f(x) - C(x)|
    \ge 2\log |x|$ for all $x$.
  \end{lemma}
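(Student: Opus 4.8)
I would argue by contradiction. Suppose $f$ is total computable with $0<f(x)\le|x|$ and $|f(x)-C(x)|\ge 2\log|x|$ for all $x$. The one fact about Kolmogorov complexity I would use repeatedly is: \emph{if a string $z$ of length $n$ is produced from $n$ by a fixed algorithm, then $C(z)\le C(n)+O(1)\le\log n+O(\log\log n)$}, which is $<2\log n$ once $n$ is large; write $\beta_n$ for this bound. I would then try to exhibit, for some large $n$, a string $z$ of length $n$ that is computable from $n$ (hence of complexity $\le\beta_n$) and for which $f(z)$ nonetheless lands in the forbidden window $(C(z)-2\log n,\;C(z)+2\log n)$.

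\textbf{The easy case (the minimiser trick).} First I would handle the case that $f$ ever outputs a ``small'' value. Suppose for some large $n$ there is a length-$n$ string $z$ with $f(z)<2\log n$; let $z^\ast$ be the lexicographically least such string. Since $z^\ast$ is obtained from $n$ by a fixed procedure (evaluate $f$ on all of $\{0,1\}^n$ and take the first one whose value is $<2\log n$), we get $C(z^\ast)\le\beta_n<2\log n$. But also $f(z^\ast)<2\log n$, so both $f(z^\ast)$ and $C(z^\ast)$ lie in $[0,2\log n)$, whence $|f(z^\ast)-C(z^\ast)|<2\log n$, contradicting the hypothesis. Consequently I may assume that $f(z)\ge 2\log|z|$ for every sufficiently long $z$.

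\textbf{The remaining case and the main obstacle.} It remains to rule out that $f$ maps every sufficiently long $\{0,1\}^n$ into $\{2\log n,\ldots,n\}$. Here the minimiser trick stops working: a single computably describable string is always ``consistently placeable'' by $f$ as a high-complexity guess ($f(z)\ge C(z)+2\log n$ is compatible with $C(z)\le\beta_n$), so the contradiction must be global. My plan is to count over the level sets $W_{n,k}:=f^{-1}(k)\cap\{0,1\}^n$, which are finite and uniformly computable. Encoding a member of $W_{n,k}$ by its index inside $W_{n,k}$ gives $C(z)\le\log|W_{n,k}|+2\log n+O(\log\log n)$ for every $z\in W_{n,k}$; since the hypothesis forces $C(z)\le k-2\log n$ or $C(z)\ge k+2\log n$ for such $z$, each level set is either \emph{small} ($|W_{n,k}|\le 2^{k+1}/n^2$, with every member of complexity $\le k-2\log n$) or \emph{large} ($|W_{n,k}|\ge 2^{k}/O(\log^2 n)$, with some member of complexity $\ge k+2\log n$), and a large level set with $k>n-2\log n+O(1)$ is impossible because no length-$n$ string has complexity beyond $n+O(1)$. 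Summing the sizes of the small level sets accounts for only $O(2^n/n^2)$ strings, so almost all strings of length $n$ lie in large level sets within a bounded middle range of $k$; I would then combine this with the counting bound $\#\{z:|z|=n,\ C(z)\le\ell\}\ge 2^{\ell-\log n-O(\log\log n)}$ (applied to a suitable $\ell=O(\log n)$, noting that the hypothesis $f\ge 2\log n$ forces \emph{all} such low-complexity strings to be mapped ``upward'') to reach a contradiction, in the spirit of the Busy-Beaver-style obstruction that defeats every concrete ``lazy'' candidate for $f$. The hard part is exactly closing this last counting step cleanly; this is where I expect the real work to lie, since each individual hypothesis we have derived ($f\ge 2\log n$, or $f\ge C+2\log n$ on the describable strings) is by itself perfectly satisfiable, and only the simultaneous demands on all strings of length $n$ are contradictory.
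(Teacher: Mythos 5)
Your ``easy case'' is correct (though the paper does not need this case split at all), and your small/large dichotomy for the level sets $W_{n,k}=f^{-1}(k)\cap\{0,1\}^n$ is sound: the small level sets jointly contain only $O(2^n/n^2)$ strings, so some level set is large, and a large one satisfies $|W_{n,k}|\ge 2^{k-O(\log\log n)}$ and $k\le n-2\log n+O(1)$. But the proof stops exactly where you say it does, and the route you gesture at for closing it --- counting the strings of complexity $O(\log n)$ and using that $f$ must map them ``upward'' --- does not yield a contradiction: for instance, every string of length $n$ with $C\le 3\log n$ (there are at most $2^{3\log n+1}$ of them) could consistently be absorbed into the single level set $W_{n,5\log n}$, which can accommodate up to $2^{k+1-2\log n}=2^{3\log n+1}$ low-complexity members. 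So this is a genuine gap, not merely a presentational one.

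The missing idea is to pin the complexity of a specific element \emph{inside} one level set by using counting in both directions simultaneously. The paper does this as follows: let $i$ be the most popular value of $f$ on $\{0,1\}^n$, so $S=W_{n,i}$ has $|S|\ge 2^n/n$ and $i$ is computable from $n$ alone (run $f$ on all length-$n$ strings). If $i\ge n-\log n$, then $S$ is too big to consist only of strings with $C<n-\log n$, so some $x\in S$ has $n-\log n\le C(x)\le n+O(1)$, hence $|f(x)-C(x)|=|i-C(x)|\le\log n+O(1)$. If $i<n-\log n$, then $2^i\le|S|$; take the lexicographically first $2^i$ elements of $S$. This set is computable from $n$ and has exactly $2^i$ elements, so every element has $C(\cdot\mid n)\le i+O(1)$ while some element $x$ has $C(x\mid n)\ge i$; removing the conditioning costs $\log n+O(\log\log n)$, so $|C(x)-i|\le\log n+O(\log\log n)<2\log n$ with $f(x)=i$. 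You have the upper-bound half of this (your bound $C(z)\le\log|W_{n,k}|+2\log n+O(\log\log n)$) but never invoke the matching lower bound ``a computably listed set of exactly $2^j$ elements contains one of conditional complexity at least $j$'' applied to a truncation of a large level set; that is precisely the step that defeats the obstruction you correctly identified as the crux.
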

  \begin{proof}
   Suppose such a computable function exists. Choose some large $n$ 
   and consider the set of values $f(x)$ for all $x$ of length $n$. 
   There is a value $i$ that appears at least $2^n/n$ times and let $S$ be the 
   set of $x$ of length $n$ such that $f(x)$ equals $i$. Thus, $|S| \ge 2^n/n$. 
   To obtain a contradiction, we consider two cases. Assume $i \ge n-\log n$. 
   Because $|S| \ge 2^n/n$, the set contains a string $x$ with $C(x) \ge n - \log n$. 
   Because $C(x) \le n+O(1)$, this implies $|i - C(x)| \le \log n + O(1)$, and because 
   $f(x) = i$ this violates the assumption on $f$ for large $n$.
   Now suppose $i < n - \log n$. Thus $2^i \le 2^n/n \le |S|$. 
   Consider the set of the lexicographic first $2^i$ strings in $S$
   This set can be computed from $n$ and $i$ (and $i$ can be computed from $n$ using $f$). 
   Hence, there is an $x$ such that $C(x|n) = i + O(1)$. 
   This implies $|C(x) - i| \le \log n + O(\log\log n)$ and by construction $f(x) = i$ 
   and this violates the assumption on $f$, a contradiction.
  \end{proof}
\end{document}